\theoremstyle{definition}
\newtheorem{theorem}{Theorem}
\newtheorem{lemma}{Lemma}
\algrenewcommand\algorithmicrequire{\textbf{Precondition:}}
\algrenewcommand\algorithmicensure{\textbf{Postcondition:}}
\begin{document}
	
\title{Bi-Directional Cooperative NOMA \\without Full CSIT}

\author{
Minseok Choi, \IEEEmembership{Student Member,~IEEE,} 
Dong-Jun Han, \IEEEmembership{Student Member,~IEEE,}
Jaekyun Moon, \IEEEmembership{Fellow,~IEEE,}
}

\markboth{TRANSACTIONS ON WIRELESS COMMUNICATIONS SUBMISSION}%
{}

\maketitle

\begin{abstract}

In this paper, we propose bi-directional cooperative non-orthogonal multiple access (NOMA).
Compared to conventional NOMA, the main contributions of bi-directional cooperative NOMA can be explained in two directions:
1) The proposed NOMA system is still efficient when the channel gains of scheduled users are almost the same. 
2) The proposed NOMA system operates well without accurate channel state information (CSI) at the base station (BS). 
In a two-user scenario, the closed-form ergodic capacity of bi-directional cooperative NOMA is derived and it is proven to be better than those of other techniques. 
Based on the ergodic capacity, the algorithms to find optimal power allocations maximizing user fairness and sum-rate are presented.
Outage probability is also derived, and we show that bi-directional cooperative NOMA achieves a power gain over uni-directional cooperative NOMA and a diversity gain over non-cooperative NOMA and orthogonal multiple access (OMA).
We finally extend the bi-directional cooperative NOMA to a multi-user model.
The analysis of ergodic capacity and outage probability in two-user scenario is numerically verified. 
Also, simulation results show that bi-directional cooperative NOMA provdes better data rates than the existing NOMA schemes as well as OMA in multi-user scenario. 

\end{abstract}

\begin{IEEEkeywords}
Non-orthogonal multiple access, Inaccurate CSI, Cooperative NOMA, Ergodic capacity, Outage probability, Power allocation, User fairness problem, Max-sum-rate problem
\end{IEEEkeywords}

\IEEEpeerreviewmaketitle

\section{Introduction}
\label{sec:Introduction}

Non-orthogonal multiple access (NOMA) based on power multiplexing has been introduced to utilize radio resources efficiently for a massive number of user terminals \cite{NOMA_basic:Docomo-Saito}.
The 4G networks mainly operate based on orthogonal multiple access (OMA), allocating orthogonal resources to multiple users.
However, as a massive number of various devices is deployed in the network, OMA is no longer able to maximize resource efficiency and to serve all devices simultaneously.
For 5G communication systems, much higher data rates are expected compared to 4G, and efficient and flexible uses of energy and spectrum have become critical issues \cite{5G:Andrews,5G:Wunder}. 
To this end, NOMA has been actively researched as a promising technology 
in 5G networks \cite{NOMA_5G:Dai,NOMA_5G:Islam}.

NOMA superposes the multi-user signals within the same frequency, time or spatial domain. 
The advanced receivers with successive interference cancellation (SIC) are typically considered to detect non-orthogonally multiplexed signals. 
In theory, NOMA provides a significant benefit in improving the cell throughput by using perfect SIC \cite{Book:Tse}. 
Performance analysis has also been conducted to examine the effectiveness of NOMA in practical environments \cite{NOMA:PerformanceAnalysis:Ding,NOMA:PerformanceAnalysis:Saito}.
User scheduling for non-orthogonally multiplexed signaling has been studied in \cite{NOMA:UserPairing:TVT-Ding}.
Optimal power allocation at the BS for NOMA users is an important issue \cite{NOMA:PowerAllocation:TWC-Choi}, for several system goals, e.g., sum-rate maximization \cite{NOMA:PowerAllocation:TWC-Yang} and user fairness \cite{NOMA:UserFairness:SPL-Timotheou}.
Recently, joint optimization of power allocation and user scheduling has been also studied for NOMA systems \cite{NOMA:URLLC:ArXiv-Choi}.

NOMA has been extensively researched in conjunction with various technologies.
There have been some studies on the system applying NOMA to MIMO \cite{MIMO-NOMA:TWC-Ding,MIMO-NOMA:TWC-Ding2} and on analyzing  ergodic capacity of MIMO-NOMA system \cite{MIMO-NOMA:WCommLetter-Sun}.
NOMA has been also considered to increase the data rate of the cell-edge user in coordinated multipoint (CoMP) \cite{NOMA-CoMP:CommLetter-Choi} and to maximize user fairness and sum-rate in distributed antenna systems \cite{NOMA-CoMP:ICC-Han}.
Recently, the application of NOMA to simultaneous wireless information and power transfer (SWIPT) \cite{NOMA-SWIPT:TWC-PD,NOMA:uni-cN-SWIPT:Liu} and physical security \cite{NOMA:security:CommLetter-Zhang} have been studied.


This paper proposes bi-directional cooperative NOMA which targets two practical channel environments: 1) when the BS knows statistical CSI but channel gain differences among users are not large, and 2) when the BS does not know CSI at all.
The proposed NOMA scheme is a kind of cooperative NOMA \cite{NOMA:uni-cN:Ding}, which allows cooperation among users via short-range communications, based on instantaneous CSI at transmitter (CSIT). 
The cooperative NOMA system of \cite{NOMA:uni-cN:Ding} improves the outage probability performance compared to conventional NOMA and it is applicable to relay communication \cite{NOMA:relay-uni-cN:Ding} and SWIPT \cite{NOMA:uni-cN-SWIPT:Liu}. 
However, it is difficult to figure out which user has the better instantaneous channel gain and which user transmits the cooperation signal to others, when only statistical CSIT or no CSIT is available.
In bi-directional cooperative NOMA system, the direction of cooperation among users can be figured out by allowing users to exchange channel information via short-range communications.

The main contributions of this paper are shown below:
\begin{itemize}
	\item 
	The closed-form ergodic capacity of bi-directional cooperative NOMA is derived, especially for a two-user scenario.
	Also, the ergodic capacity of bi-directional cooperative NOMA is shown to be better than those of other existing NOMA schemes and OMA, even when the channel variances between the users are small.
	
	\item Based on the ergodic capacity analysis, this paper presents the optimal power allocation algorithms to maximize user fairness and sum-rate for bi-directional cooperative NOMA.
	
	\item Outage probabilities of bi-directional cooperative NOMA are derived in a two-user scenario.
	For the non-SIC user, it is shown that the proposed system has a power gain over the existing NOMA schemes.
	For the SIC user, our scheme is shown to have a diversity gain over conventional NOMA and OMA.
	
	\item The extension of bi-directional cooperative NOMA to the multi-user model is presented by performing cooperation on signal-by-signal basis, not on user-by-user \cite{NOMA:uni-cN:Ding}.
	The cooperation on signal-by-signal basis does not require additional power allocations for the cooperation phases.
	
	\item Simulation results verify the analysis of ergodic capacity and outage probability. 
	Moreover, bi-directional cooperative NOMA is shown to provide better data rates than other NOMA schemes and OMA even without enough CSI, necessarily required for conventional NOMA.
	
\end{itemize}

We first propose the two-user scenario of bi-directional cooperative NOMA in Section \ref{sec:system_model}.
Ergodic capacity analysis is performed in Section \ref{sec:ergodic_capacity_analysis} and the optimal power allocation algorithms to maximize user fairness and sum-rate maximization are presented in Section \ref{sec:power_allocation}.
In Section \ref{sec:outage_prob}, outage probability of the proposed system is analyzed.
Bi-directional cooperative NOMA is extended to the multi-user model in Section \ref{sec:multi-user}, and simulation results are shown in Section \ref{sec:simulation}.
Lastly, we conclude the paper in Section \ref{sec:conclusion}.

\section{System Model}
\label{sec:system_model}

\subsection{Channel Model}
\label{subsec:channel_model}

Consider cellular downlink communications in which a BS transmits signals to two users simultaneously. 
Extension to the multi-user model will be shown in Section \ref{sec:multi-user}.
The Rayleigh fading channel from the BS to user $i$ is defined as $h_i = \sqrt{L_i}g_i$ for $i=1,2$.
$L_{i}=1/d_i^2$ denotes the slow fading, where $d_i$ is the distance from the BS to user $i$ and 
$g_i$ is a fast fading component with a complex Gaussian distribution, $g_i \sim CN(0,1)$. 

In this paper, two cases are considered in terms of the CSI knowledge: 
only statistical CSIT and no CSIT.
Here, the statistical CSIT means that the BS knows only users' channel variances. 
The BS usually allocates more power to the user having the smaller channel variance. 
The user having a larger channel variance performs SIC first to subtract inter-user interference, 
and then decodes its data. 
However, if the distances from the BS to two users are similar, then randomly generated channels cannot guarantee that the user of the larger variance experiences the stronger channel gain.
In this case, the performance gain of NOMA over OMA mostly vanishes.

In practice, there exists harsh environments where CSI is hardly known at the BS. 
For example, in an IoT environment, a clumsy device acts as a transmitter but cannot handle substantial processing tasks, i.e., channel tracking and elaborate user scheduling, so CSI is not available at the transmitter side.
In this no-CSIT case, the BS cannot judge which user has a larger or smaller channel variance, so the system should arbitrarily decide the given user to perform SIC or not. 
Also, optimal power allocation cannot be found without any CSI, so fixed power ratios for the users will be assumed.
The problem for the no-CSIT case occurs when the user with the weaker channel gain is selected to perform SIC.
In this case, there is no merit of employing conventional NOMA.
This paper proposes a new cooperative NOMA system for reliable downlink transmission for both statistical-CSIT only and no-CSIT cases.

\begin{figure} [h!]
	\centering
	\includegraphics[width=0.48\textwidth]{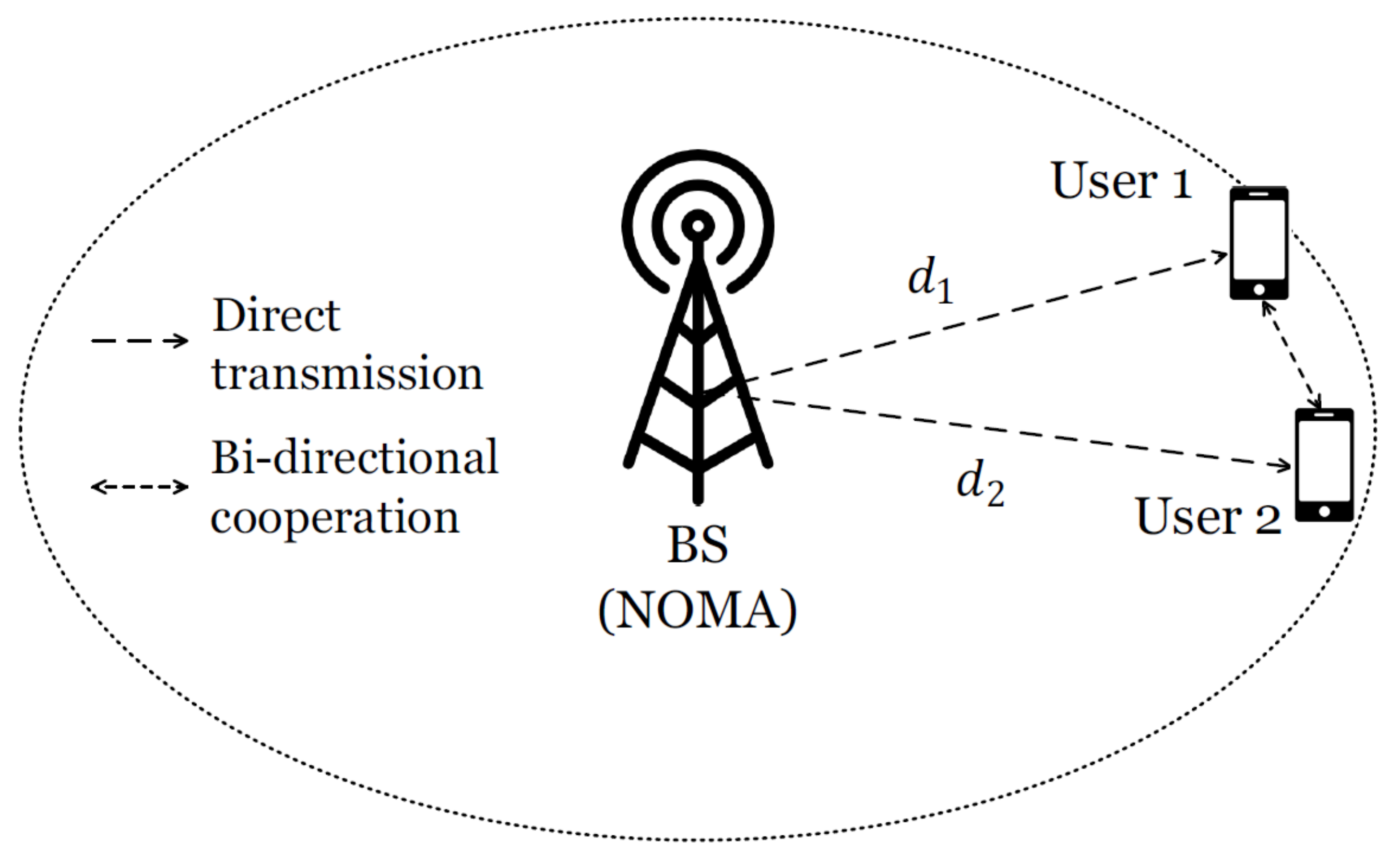}
	\caption{Bi-directional NOMA model}
	\label{fig:bi-cN_model}
\end{figure}

\subsection{Direct Transmission Phase}

Denote $\gamma_i$ as the power ratio allocated to user $i\in \{1,2\}$, satisfying $\gamma_1+\gamma_2 = 1$.
The received signal of user $i$ is given by
\begin{equation}
	r_i = h_i(\sqrt{\gamma_1}s_1 + \sqrt{\gamma_2}s_2) + n_i, \label{eq:received:user}
\end{equation}
where $s_i$, and $n_i$ are transmitted symbol and noise at user $i$, respectively, and $n_i \sim CN(0,\sigma_n^2)$. $\sigma_n^2$ is the normalized noise variance.
Assume a normalized unit power at the BS, $\mathbb{E}[|s_i|^2]=1$.
Throughout the paper, users 1 and 2 are the non-SIC user and the SIC user, respectively.

Let $V_{i,k}$ be the SINR of user $i$ to decode $s_k$. 
Then, the received SINRs at both users become
\begin{eqnarray}
V_{1,1} &=& \frac{|h_1|^2 \gamma_1}{|h_1|^2 \gamma_2 +\sigma_n^2} \\
V_{2,2} &=& \frac{|h_2|^2 \gamma_2}{\sigma_n^2}, 
\end{eqnarray}
and SINR for SIC at user 2 is given by
\begin{equation}
V_{2,1} = \frac{|h_2|^2 \gamma_1}{|h_2|^2 \gamma_2 + \sigma_n^2}.
\end{equation}
Here, user 2 performs SIC for $s_1$ first with $V_{2,1}$, and decodes $s_2$ with $V_{2,2}$.

\subsection{Channel Information Exchange Phase}
\label{subsec:channel_exchange_phase}

Since both users decode $s_1$ in the direct transmission phase, cooperation between users for improved decoding of $s_1$ is possible.
Although both users' decoding processes at the direct transmission phase can be reliable, the risk is that all users receive and exploit the cooperation signal, when decoding of the user with the weaker channel gain fails.
Therefore, the system allows only the user $i_0$, satisfying $i_0 = \underset{i\in \{1,2\}}{\arg \max} |h_i|^2$, to transmit the cooperation signal.
This indicates that transmission of the cooperation signal can be bi-directional, but actual cooperation at each time is performed at only the user of the weaker channel gain. 
To find user $i_0$, users exchange their CSI or just the received channel power in this phase.
We assume that all users are located nearby and the exchange of CSI is performed via short-range communications, so this phase would not take too much time. 
A highly crowded stadium is one example, where the distances between the BS and users do not differ greatly so the BS with statistical CSIT only and no CSIT would hardly determine the direction of cooperation appropriately.

\textit{Remark}: 
Obviously, CSI exchange among users allow the BS not to collect all users' CSI at the expense of the additional delay and signaling overhead.
Therefore, the proposed scheme can be more advantageous than existing cooperative NOMA with full CSIT, only when the CSI exchange step requires less time and overhead than transmission of CSI feedbacks.
In the case of statistical CSIT only or no CSIT, CSI feedbacks are not required, but much better data rates can be obtained by allowing the exchanges of CSI among users at the expense of the additional delay and overhead, as shwon in Section \ref{sec:simulation}. 

One more thing to remark is that even if the CSI exchange step incurs longer delays than transmission of CSI feedbacks, the proposed technique can be still beneficial over conventional schemes, especially when the channel coherence time is very short. 
In the proposed scheme, since all users already received NOMA signals from the BS and obtained the desired CSI at the direct transmission phase, even though channel conditions change during the phase of CSI exchange, users can find the appropriate direction of cooperation.
On the other hand, in conventional NOMA where the BS should collect the users' CSI feedbacks, channel conditions at the time when users send CSI feedbacks to the BS, could be changed when the BS transmits the NOMA signal to all users.
Also, as responsibility for determining the direction of cooperation signals is shifted to user sides, the BS does not need to handle substantial processing tasks for estimating the exact CSI, e.g., channel tracking. 

\subsection{Bi-Directional Cooperative Phase}
\label{subsec:bi-directional_cooperative_phase}

In this phase, the cooperation signal is transmitted from the user with stronger channel gain to the user with weaker gain.
The cooperation signal can help user 1 to decode its data, or user 2 to perform SIC better.
The received cooperation signal at user $i$ is given by
\begin{equation}
	c_{i} = g_{k,i} s_c + n_{c,i}
\end{equation}
where $i \neq k$, $g_{k,i}$ is a Rayleigh fading channel coefficient from user $k$ to user $i$, and $s_c=s_1$ here.
As mentioned in the channel information exchange phase, when $|h_1|^2 > |h_2|^2$, only $c_2$ exists, and when $|h_1|^2 < |h_2|^2$, only $c_1$ is transmitted from user 2.
The received SINR at user $i$ is
\begin{equation}
	W_{i} = \frac{|g_{k,i}|^2}{\sigma_n^2},
\end{equation}
and SINR for decoding $s_1$ is given by
\begin{equation}
Z_{\text{cN},1}^{\text{bi}} = 
\begin{cases}
\min \{ \max\{V_{1,1}, W_{1}\}, V_{2,1} \} & \text{if }|h_1|^2 < |h_2|^2 \\
\min \{ V_{1,1}, \max\{V_{2,1}, W_{2}\} \} & \text{otherwise}
\end{cases}.
\label{eq:bi-cN:R1}
\end{equation}
Here, if a certain user receives the cooperation signal, she chooses the better of the signals received in the direct transmission and bi-directional cooperative phases.
When maximal-ratio combining is exploited \cite{NOMA:uni-cN:Ding,NOMA:relay-uni-cN:Ding}, we achieve
\begin{equation}
Z_{\text{cN},1}^{\text{bi}} = 
\begin{cases}
\min \{ V_{1,1} + W_{1}, V_{2,1} \} & \text{if }|h_1|^2 < |h_2|^2 \\
\min \{ V_{1,1}, V_{2,1}+ W_{2} \} & \text{otherwise}
\end{cases}.
\label{eq:bi-cN:R1-MRC}
\end{equation}
Since only $s_1$ is shared for cooperation, SINR for decoding $s_2$ is $Z_{\text{cN},2}^{\text{bi}} = V_{2,2}$.

Assume that both users are located close to each other, i.e., $W_{1}, W_{2} \gg V_{1,1}, V_{2,1}, V_{2,2}$.
Then, $Z_{\text{cN},1}^{\text{bi}} \approx V_{2,1}$ when $|h_1|^2 < |h_2|^2$ or $Z_{\text{cN},1}^{\text{bi}} \approx V_{1,1}$ otherwise, and both (\ref{eq:bi-cN:R1}) and (\ref{eq:bi-cN:R1-MRC}) are simplified to  
\begin{equation}
\tilde{Z}_{\text{cN},1}^{\text{bi}} \simeq \max \{V_{1,1}, V_{2,1}\}.
\label{eq:bi-cN:R1-near-user}
\end{equation}
This assumption is used throughout the paper. 
The data rate of $s_i$ in bi-directional cooperative NOMA becomes $R_{\text{cN},i}^{\text{bi}} = \log_2(1+Z_{\text{cN},i}^{\text{bi}})$.

The difference of our work from \cite{NOMA:uni-cN:Ding} is that the direction of cooperation is determined at user sides by exchanging CSI among users, especially when the BS knows only statistical CSIT or no CSIT at all. 
Cooperative NOMA in \cite{NOMA:uni-cN:Ding} is based on instantaneous CSIT, and the BS can determine the user with stronger channel gain as the SIC user and the other one with weaker gain as the non-SIC user.
This makes the direction of cooperation to be always from the SIC user to the non-SIC user in the cooperative NOMA scheme of \cite{NOMA:uni-cN:Ding}.
In the statistical-CSIT only or no CSIT cases, however, there is no guarantee that the SIC user's instantaneous channel is better than that of the non-SIC user.
Therefore, exchanging the channel information among users is necessary for bi-directional cooperative NOMA to force the user with stronger channel gain transmit the cooperation signal.
We consider for comparison purposes uni-directional cooperative NOMA where direction of cooperation is always from the SIC user to the non-SIC user.

Since uni-directional cooperative NOMA only allows the SIC user to transmit the cooperation signal to the non-SIC user, SINR for decoding $s_1$ is $Z_{\text{cN},1}^{\text{uni}} = \min\{ \max\{V_{1,1}, W_{1} \}, V_{2,1} \}$, and $\tilde{Z}_{\text{cN},1}^{\text{uni}} \simeq V_{2,1}$ is obtained with the assumption that both users are located close to each other.
Also, SINR for decoding $s_1$ in conventional NOMA is $Z_{1}^{\text{N}} = \min \{V_{1,1}, V_{2,1}\}$ \cite{NOMA:PerformanceAnalysis:Ding}. 
Comparing $\tilde{Z}_{\text{cN},1}^{\text{bi}}$ with $\tilde{Z}_{\text{cN},1}^{\text{uni}}$ and $Z_1^{\text{N}}$, we can find that bi-directional cooperative NOMA exploits channel diversity. 
It is clear that $\tilde{R}_{\text{cN},1}^{\text{bi}}$ is better than or equal to $\tilde{R}_{\text{cN},1}^{\text{uni}}$ and $R_{\text{N},1}$. 
On the other hand, since the cooperation is only helpful for $s_1$, the data rate of $s_2$ is the same for all considered schemes, i.e., $R_{\text{cN},2}^{\text{bi}} = R_{\text{cN},2}^{\text{uni}} = R_{\text{N},2}$. The sum rate is obtained by $R_{\text{cN}}^{\text{bi}} = R_{\text{cN},1}^{\text{bi}} + R_{\text{cN},2}^{\text{bi}}$.

When targeted data rates are already determined, the outage event of a certain user is the criterion for determining whether the other user should receive the cooperation signal or not. 
Let $\epsilon_{1}$ and $\epsilon_{2}$ be SINR thresholds for decoding $s_1$ and $s_2$, respectively. 
Then, in this two-user scenario, even though $|h_1|^2 < |h_2|^2$, user 1 cannot receive the cooperation signal from user 2 when $V_{2,1} < \epsilon_{1}$.
Since $|h_1|^2 < |h_2|^2$, it is also clear $V_{1,1} < \epsilon_{1}$, so decoding of $s_1$ fails at both user sides.
On the other hand, when $V_{2,1} > \epsilon_{1}$, user 1 can decode $s_1$ by using cooperation from user 2, even if $V_{1,1} < \epsilon_{1}$.
The analysis of outage probability is given in Section \ref{sec:outage_prob}.

\section{Ergodic Capacity Analysis}
\label{sec:ergodic_capacity_analysis}

When users' data rates are opportunistically determined by their Quality of Service (QoS) requirements, ergodic capacity analysis is important.
Some key lemmas are established first in deriving the closed-form ergodic capacity of bi-directional cooperative NOMA.

\begin{lemma}
	For real constants $a, b>0$ and a chi-square random variable $X$, an expected value of the function $\log_2(1+\frac{aX}{b})$ becomes
	\begin{equation}
	\mathbb{E}\bigg[\log_2 \Big(1+ \frac{aX}{b} \Big)\bigg] = C_1 \Big( \frac{2a}{b} \Big),
	\end{equation}
	where $C_1(x) = \frac{1}{\ln 2} e^{1/x} \int_{1}^{\infty} \frac{1}{t} e^{-t/x} \mathrm{d}t$, for $x>0$
	\label{lemma:expected_log_chi}
\end{lemma}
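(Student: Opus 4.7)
The plan is to compute the expectation directly from the density of $X$. Since the paper treats $|g_i|^2$ (with $g_i\sim CN(0,1)$) as a chi-square quantity in the surrounding text, I will take $X$ to be chi-square with two degrees of freedom, so that $f_X(x)=\tfrac{1}{2}e^{-x/2}$ on $x\ge 0$. Writing
\[
\mathbb{E}\!\left[\log_2\!\Big(1+\tfrac{aX}{b}\Big)\right] = \int_0^\infty \log_2\!\Big(1+\tfrac{ax}{b}\Big)\,\tfrac{1}{2}e^{-x/2}\,dx,
\]
I would first fold the factor of two from the density into the argument of the logarithm by substituting $y=x/2$ (so $dx=2\,dy$), which produces
\[
\mathbb{E}\!\left[\log_2\!\Big(1+\tfrac{aX}{b}\Big)\right] = \int_0^\infty \log_2\!\Big(1+\tfrac{2a}{b}y\Big)\,e^{-y}\,dy.
\]
This step is what turns the ratio $a/b$ into $2a/b$ and explains the factor of $2$ appearing inside $C_1$.

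Next I would perform the shift $t=1+\tfrac{2a}{b}y$, i.e.\ $y=\tfrac{b}{2a}(t-1)$ and $dy=\tfrac{b}{2a}\,dt$, which carries the interval $[0,\infty)$ to $[1,\infty)$ and gives
\[
\mathbb{E}\!\left[\log_2\!\Big(1+\tfrac{aX}{b}\Big)\right] = \tfrac{b}{2a}\,e^{b/(2a)}\int_1^\infty \log_2(t)\,e^{-bt/(2a)}\,dt.
\]
Now the key computational step is an integration by parts with $u=\log_2(t)$ and $dv=e^{-bt/(2a)}\,dt$; then $du=\tfrac{1}{t\ln 2}\,dt$ and $v=-\tfrac{2a}{b}e^{-bt/(2a)}$. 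The boundary term vanishes at $t=1$ because $\log_2 1=0$ and at $t=\infty$ because exponential decay dominates the logarithm, so only the remainder survives.

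Collecting the surviving integral together with the prefactors yields
\[
\mathbb{E}\!\left[\log_2\!\Big(1+\tfrac{aX}{b}\Big)\right] = \tfrac{1}{\ln 2}\,e^{b/(2a)}\int_1^\infty \tfrac{1}{t}\,e^{-bt/(2a)}\,dt,
\]
which is precisely $C_1(2a/b)$ by the definition given in the statement, with $x=2a/b$. I do not anticipate a genuine obstacle here; the only care required is keeping the substitutions consistent so that the exponent $1/x$ in the prefactor and $t/x$ in the integrand line up with $x=2a/b$ rather than $x=a/b$. The integration by parts and vanishing of boundary terms are routine once the substitutions are in place.
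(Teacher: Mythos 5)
Your proof is correct. You compute the expectation directly against the density $\tfrac12 e^{-x/2}$ of a chi-square variable with two degrees of freedom, rescale, shift to $t=1+\tfrac{2a}{b}y$, and then integrate by parts to kill the logarithm; every substitution and the vanishing of the boundary terms check out, and the prefactors assemble into $C_1(2a/b)$ exactly as claimed. The paper instead uses the tail-integral (layer-cake) identity $\mathbb{E}[Z]=\int_0^\infty P[Z\ge z]\,\mathrm{d}z$ with the substitution $t=2^z$, which lands on the integral $\int_1^\infty \tfrac{1}{t\ln 2}e^{-\frac{b}{2a}(t-1)}\,\mathrm{d}t$ in one step and never needs an explicit integration by parts. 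The two arguments are dual to one another --- your integration by parts is precisely the step that converts the direct density integral into the tail-probability form --- so neither buys more generality; the paper's version is slightly shorter, while yours makes the mechanism behind the factor $2a/b$ more transparent and, as a bonus, essentially reproves the paper's Lemma~\ref{lemma:integral_chi} along the way (your middle display is that lemma with $b$ replaced by $1$ after the rescaling $y=x/2$). One presentational point worth keeping: you correctly flag that ``chi-square'' here must mean two degrees of freedom (i.e., $X$ exponential with mean $2$), an assumption the paper leaves implicit but relies on in exactly the same way.
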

\begin{proof}
	Let the nonnegative random variable $Z = \log_2(1+aX/b)$; then $\mathbb{E}[Z] = \int_{0}^{\infty} P[Z \geq z] \mathrm{d}z$ is satisfied.
	Therefore,
	\begin{equation}
	\mathbb{E}[Z] = \int_{0}^{\infty} \bigg( 1-P[Z \leq z] \mathrm{d}z \bigg) = \int_{1}^{\infty} \frac{1}{t \ln 2} e^{-\frac{b}{2a}(t-1)} \mathrm{d}t = C_1 \Big(\frac{2a}{b}\Big),
	\end{equation}
	where $t=2^z$.
\end{proof}

\begin{lemma}
	For real constants $a>0$, $b$, and a chi-square random variable $X$, 
	\begin{equation}
	\int_{0}^{\infty} e^{-bx} \log_2(1+ax) \mathrm{d}x 
	= \frac{1}{b} C_1 \Big(\frac{a}{b}\Big)
	\end{equation}
	\label{lemma:integral_chi}
\end{lemma}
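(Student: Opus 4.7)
The claim is purely an analytic identity: the mention of a chi-square random variable in the hypothesis appears vestigial, since neither side of the asserted equation involves a random variable. So I would treat it as a deterministic integral evaluation and proceed by integration by parts followed by a linear substitution.

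The plan is the following. First, apply integration by parts to $\int_{0}^{\infty} e^{-bx}\log_2(1+ax)\,\mathrm{d}x$ with the choice $u=\log_2(1+ax)$ and $\mathrm{d}v=e^{-bx}\,\mathrm{d}x$, giving $\mathrm{d}u=\frac{a}{(1+ax)\ln 2}\,\mathrm{d}x$ and $v=-\frac{1}{b}e^{-bx}$. The boundary term
\[
\left[-\tfrac{1}{b}e^{-bx}\log_2(1+ax)\right]_{0}^{\infty}
\]
vanishes at $x=0$ because $\log_2(1)=0$, and at $x=\infty$ because $e^{-bx}$ decays much faster than $\log_2(1+ax)$ grows (this uses $b>0$, which the defining integral in $C_1$ already tacitly requires; without this assumption the right-hand side diverges as well). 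So the integral reduces to
\[
\frac{1}{b\ln 2}\int_{0}^{\infty}\frac{a\,e^{-bx}}{1+ax}\,\mathrm{d}x.
\]

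Second, perform the substitution $t=1+ax$, so that $\mathrm{d}t=a\,\mathrm{d}x$ and $x=(t-1)/a$; the lower limit $x=0$ maps to $t=1$ and the upper limit is preserved. This converts the remaining integral into
\[
\frac{1}{b\ln 2}\int_{1}^{\infty}\frac{1}{t}\,e^{-b(t-1)/a}\,\mathrm{d}t
=\frac{1}{b}\cdot\frac{1}{\ln 2}\,e^{b/a}\int_{1}^{\infty}\frac{1}{t}e^{-t\,(b/a)}\,\mathrm{d}t,
\]
after pulling the factor $e^{b/a}$ outside the integral.

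Finally, compare this last expression with the definition $C_1(x)=\frac{1}{\ln 2}e^{1/x}\int_{1}^{\infty}\frac{1}{t}e^{-t/x}\,\mathrm{d}t$ evaluated at $x=a/b$: the exponent $b/a$ equals $1/(a/b)$, and the exponent inside the integral $t\,(b/a)$ equals $t/(a/b)$. Hence the bracketed factor is exactly $C_1(a/b)$, yielding $\frac{1}{b}C_1(a/b)$ as required. There is no real obstacle here; the only step that warrants a note is confirming the boundary term at infinity vanishes, which follows immediately from $b>0$ and the logarithmic growth of $\log_2(1+ax)$.
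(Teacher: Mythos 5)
Your proof is correct and follows essentially the same route as the paper's: integration by parts with $u=\log_2(1+ax)$ and $\mathrm{d}v=e^{-bx}\,\mathrm{d}x$, a vanishing boundary term, and the substitution $t=1+ax$ to match the definition of $C_1$. Your side remarks are also apt --- the chi-square hypothesis is indeed vestigial and the identity tacitly requires $b>0$, a condition the paper's statement omits.
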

\begin{proof}
	\begin{eqnarray}
	\int_{0}^{\infty} e^{-bx} \log_2(1+ax) \mathrm{d}x &=& \Big[-\frac{1}{b}e^{-bx}\log_2(1+ax)\Big]_0^{\infty} + \int_{0}^{\infty} \frac{1}{\ln 2}\cdot \frac{a}{b(1+ax)} e^{-bx} \mathrm{d}x \\
	&=& \int_{1}^{\infty} \frac{1}{b \ln 2} \cdot \frac{1}{t} e^{-\frac{b}{a}(t-1)} \mathrm{d}t = \frac{1}{b} C_1 \Big(\frac{a}{b}\Big),
	\end{eqnarray}
	where $t=1+ax$.
\end{proof}

\begin{lemma}
	$C_1 (x)$ is an increasing function of $x>0$.
\end{lemma}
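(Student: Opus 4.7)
The plan is to reduce the claim to an immediate pointwise monotonicity argument via Lemma 1. Taking the admissible values $a = x/2 > 0$ and $b = 1$ in that lemma gives the representation
\[
C_1(x) \;=\; \mathbb{E}\!\left[\log_2\!\left(1 + \tfrac{x}{2} X\right)\right],
\]
where $X$ is a chi-square random variable. For every outcome $X \ge 0$, the map $x \mapsto \log_2(1 + xX/2)$ is nondecreasing in $x > 0$, and strictly increasing whenever $X > 0$. Since a chi-square variable is strictly positive with probability one, monotonicity of expectation yields $C_1(x_1) < C_1(x_2)$ for all $0 < x_1 < x_2$. This route avoids any interchange of differentiation and integration: monotonicity holds realization by realization.

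If a purely analytic argument is preferred, I would instead substitute $u = t/x$ in the defining integral to obtain $C_1(x) = \tfrac{1}{\ln 2}\, e^{1/x} E_1(1/x)$, where $E_1(y) = \int_y^\infty \tfrac{e^{-u}}{u}\,\mathrm{d}u$ is the exponential integral. Setting $y = 1/x$, monotonicity of $C_1$ in $x$ is equivalent to $y \mapsto e^y E_1(y)$ being strictly decreasing on $y > 0$. A direct differentiation gives $\tfrac{\mathrm{d}}{\mathrm{d}y}\bigl[e^y E_1(y)\bigr] = e^y E_1(y) - \tfrac{1}{y}$, and the elementary bound $E_1(y) < \tfrac{e^{-y}}{y}$, obtained by pulling $1/u \le 1/y$ outside the integral on $[y,\infty)$, delivers the required sign.

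No step here is truly an obstacle: once Lemma 1 is invoked, the only fact needed is that a chi-square random variable is almost surely positive, which is immediate. The analytic alternative requires only one standard exponential-integral inequality, which itself is a one-line estimate. Either route produces a short, self-contained proof, and I would favor the probabilistic version because it exploits a lemma already proved on the same page and sidesteps the somewhat fiddly manipulation of $E_1$.
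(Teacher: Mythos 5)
Your proposal is correct, and your preferred (probabilistic) route is genuinely different from the paper's. The paper differentiates $C_1$ directly, obtaining $\frac{\mathrm{d}}{\mathrm{d}x}C_1(x) = -\frac{1}{x^2\ln 2}e^{1/x}\int_1^\infty \frac{e^{-t/x}}{t}\,\mathrm{d}t + \frac{1}{x\ln 2}$, and then invokes a cited exponential-integral bound, $e^{-1/x}\ln(1+x) > \int_1^\infty \frac{e^{-t/x}}{t}\,\mathrm{d}t$, together with $x > \ln(1+x)$ to conclude positivity. Your first argument bypasses calculus entirely: writing $C_1(x) = \mathbb{E}\bigl[\log_2\bigl(1+\tfrac{x}{2}X\bigr)\bigr]$ via Lemma 1 with $a = x/2$, $b=1$, and noting that the integrand is increasing in $x$ for every realization $X>0$ (an almost-sure event), monotone comparison of expectations gives strict monotonicity of $C_1$ with no interchange of limits and no special-function estimates. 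This is shorter, self-contained, and arguably clarifies \emph{why} the lemma is true (it is the ergodic rate of a channel whose SNR scales with $x$). Your analytic alternative is essentially the paper's argument in the variable $y=1/x$, but it replaces the handbook inequality with the one-line bound $E_1(y) < e^{-y}/y$, which suffices because $\frac{\mathrm{d}}{\mathrm{d}y}\bigl[e^{y}E_1(y)\bigr] = e^{y}E_1(y) - \tfrac{1}{y} < 0$; this makes the proof self-contained where the paper's relies on a citation. Both of your routes are sound; the only implicit hypothesis in the probabilistic one is integrability of $\log_2(1+\tfrac{x}{2}X)$, which is immediate for a chi-square variable.
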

\begin{proof}
	\begin{equation}
	\frac{\mathrm{d}}{\mathrm{d}x} C_1 (x) = -\frac{1}{x^2 \ln 2} e^{1/x} \int_1^{\infty} \frac{e^{-t/x}}{t} \mathrm{d}t + \frac{1}{x \ln 2} > -\frac{1}{x^2 \ln 2} \ln (1+x) + \frac{1}{x \ln 2},
	\end{equation}
	where the last inequality is satisfied according to $e^{-1/x}\ln (1+x) > \int_1^{\infty} \frac{e^{-t/x}}{t} \mathrm{d}t$ \cite{EiFunction}.
	Since $x>0$ and $x>\ln (1+x)$, $\frac{\mathrm{d}}{\mathrm{d}x} C_1 (x) > 0$, $C_1 (x)$ is an increasing function of $x>0$.
\end{proof}

With the assumption that both users are located nearby, Theorem \ref{thm:closed_ergodic_bi-cN} gives the closed-form ergodic capacity of the bi-directional cooperative NOMA system.
Also, Theorem \ref{thm:ergodic_comparison_NOMA} shows that the ergodic capacity of the bi-directional cooperative NOMA is larger than those of uni-directional cooperative NOMA and conventional NOMA, no matter which user's channel gain is larger. 

\begin{theorem}
	Assuming that both users are located close to each other, the closed-form ergodic capacity of two-user bi-directional cooperative NOMA is 
	\begin{equation}
	\mathbb{E}[\tilde{R}_{\text{cN}}^{\text{bi}}] = C_1 \Big(\frac{L_1}{\sigma_n^2}\Big) - C_1 \Big(\frac{\gamma_2 L_1}{\sigma_n^2}\Big) + C_1 \Big(\frac{L_2}{\sigma_n^2}\Big) - C_1 \Big(\frac{L_1 L_2}{(L_1+L_2)\sigma_n^2}\Big) + C_1 \Big(\frac{\gamma_2 L_1 L_2}{(L_1 + L_2)\sigma_n^2}\Big),
	\label{eq:closed_ergodic_cN_bi}
	\end{equation}
	\label{thm:closed_ergodic_bi-cN}
\end{theorem}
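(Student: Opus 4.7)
The plan is to reduce $\tilde{R}_{\text{cN}}^{\text{bi}} = \log_2(1+\tilde{Z}_{\text{cN},1}^{\text{bi}}) + \log_2(1+V_{2,2})$ to a linear combination of five logarithms each of the form $\log_2(1+\alpha Z/\sigma_n^2)$, with $Z$ equal to $|h_1|^2$, $|h_2|^2$, or the minimum of the two, and then apply Lemma \ref{lemma:expected_log_chi} term by term. The first step is to observe that $V_{i,1}$ is the same strictly increasing function of $|h_i|^2$ for both users, namely $y\mapsto y\gamma_1/(y\gamma_2+\sigma_n^2)$, so $\max\{V_{1,1},V_{2,1}\}$ is that function evaluated at $Y^{\ast}:=\max\{|h_1|^2,|h_2|^2\}$. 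The algebraic identity $1+V_{i,1}=(|h_i|^2+\sigma_n^2)/(|h_i|^2\gamma_2+\sigma_n^2)$ then rewrites
\[
\log_2(1+\tilde{Z}_{\text{cN},1}^{\text{bi}}) = \log_2(1+Y^{\ast}/\sigma_n^2) - \log_2(1+Y^{\ast}\gamma_2/\sigma_n^2).
\]

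Next, I would invoke the elementary pointwise identity $f(Y_1)+f(Y_2)=f(Y^{\ast})+f(Y^{\min})$, where $Y^{\min}:=\min\{|h_1|^2,|h_2|^2\}$, applied once with $f(y)=\log_2(1+y/\sigma_n^2)$ and once with $f(y)=\log_2(1+y\gamma_2/\sigma_n^2)$. Substituting and combining with $\log_2(1+V_{2,2})=\log_2(1+|h_2|^2\gamma_2/\sigma_n^2)$ produces a convenient cancellation: the $|h_2|^2\gamma_2$ term introduced by the decomposition is annihilated by $\log_2(1+V_{2,2})$, leaving five surviving logarithms involving $|h_1|^2$, $|h_2|^2$, and $Y^{\min}$ only.

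Taking expectations, each surviving term matches Lemma \ref{lemma:expected_log_chi} exactly, since $2|g_i|^2$ is $\chi^2_2$. The $|h_1|^2$ and $|h_2|^2$ pieces contribute $C_1(L_1/\sigma_n^2)$, $C_1(L_2/\sigma_n^2)$, and $C_1(\gamma_2 L_1/\sigma_n^2)$. For the two $Y^{\min}$ terms, I would compute $P(Y^{\min}>y)=e^{-y/L_1}e^{-y/L_2}$ from independence of $h_1,h_2$, so that $Y^{\min}$ is exponential with mean $L_1L_2/(L_1+L_2)$ and hence equal in law to $(L_1L_2/(L_1+L_2))|g|^2$ for a standard $g\sim CN(0,1)$; Lemma \ref{lemma:expected_log_chi} then yields $C_1(L_1L_2/((L_1+L_2)\sigma_n^2))$ and $C_1(\gamma_2 L_1L_2/((L_1+L_2)\sigma_n^2))$. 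Collecting signs reproduces the stated formula.

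The only step I expect to require real thought is the $f(Y_1)+f(Y_2)=f(Y^{\ast})+f(Y^{\min})$ substitution. It is trivial pointwise, yet it is precisely what separates the intractable $\max$ expectation into marginal expectations that Lemma \ref{lemma:expected_log_chi} can handle, and it is what engineers the cancellation with the $V_{2,2}$ term; without it, the path from the max-SINR expression to the five $C_1$ evaluations is not obvious.
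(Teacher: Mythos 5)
Your proposal is correct, and it reaches the stated formula by a genuinely different route from the paper. The paper's proof (Appendix~\ref{appendix:thm1}) conditions on the ordering of the channel gains, splits $\mathbb{E}[\tilde{R}_{\text{cN},1}^{\text{bi}}]$ into double integrals over the regions $|h_1|^2>|h_2|^2$ and $|h_1|^2<|h_2|^2$, integrates out one variable to get weighted differences of exponentials, and then evaluates each piece with Lemma~\ref{lemma:integral_chi}; the cross terms with argument $\tfrac{L_1L_2}{(L_1+L_2)\sigma_n^2}$ emerge from the $e^{-\frac{L_1+L_2}{2L_2}x}$ factors produced by the inner integration, and the $C_1(\gamma_2 L_2/\sigma_n^2)$ terms cancel only at the final summation with $\mathbb{E}[R_{\text{cN},2}^{\text{bi}}]$. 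You instead exploit that $V_{1,1}$ and $V_{2,1}$ are the same increasing function of $|h_i|^2$, apply the pointwise identity $f(Y_1)+f(Y_2)=f(Y^{\ast})+f(Y^{\min})$ to trade the max for marginals plus a min, and use the fact that the minimum of independent exponentials with means $L_1,L_2$ is exponential with mean $L_1L_2/(L_1+L_2)$ — so only Lemma~\ref{lemma:expected_log_chi} is needed, no region splitting and no double integrals. Your route is shorter, makes the cancellation of the $|h_2|^2\gamma_2$ term with $\log_2(1+V_{2,2})$ explicit at the level of random variables rather than of already-computed expectations, and explains structurally why the harmonic-mean-type arguments $\tfrac{L_1L_2}{L_1+L_2}$ appear (they are simply the mean of $Y^{\min}$). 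The one point worth stating explicitly in a final write-up is the monotonicity of $y\mapsto y\gamma_1/(y\gamma_2+\sigma_n^2)$, which justifies pulling the max inside the SINR map; you assert it, and it is immediate since the derivative is $\gamma_1\sigma_n^2/(y\gamma_2+\sigma_n^2)^2>0$, but the whole argument hinges on it.
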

\begin{proof}
	See Appendix \ref{appendix:thm1}.
\end{proof}

\begin{theorem}
	
	Assuming that both users are located close to each other, 
	\begin{equation}
	\mathbb{E}[\tilde{R}^{\text{bi}}_{\text{cN}}] \geq \mathbb{E}[\tilde{R}^{\text{uni}}_{\text{cN}}] \geq \mathbb{E}[R_{\text{N}}]
	\end{equation}
	\label{thm:ergodic_comparison_NOMA}
	
\end{theorem}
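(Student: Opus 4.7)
The plan is to reduce the theorem to a pointwise (per-realization) inequality between the effective SINRs used to decode $s_1$ under the three schemes, and then invoke monotonicity of $\log_2(1+\cdot)$ and of expectation. From the discussion in Section \ref{sec:system_model}, under the proximity assumption $W_1, W_2 \gg V_{1,1}, V_{2,1}, V_{2,2}$ the effective SINRs for $s_1$ become
\begin{equation}
\tilde{Z}_{\text{cN},1}^{\text{bi}} \simeq \max\{V_{1,1}, V_{2,1}\}, \qquad \tilde{Z}_{\text{cN},1}^{\text{uni}} \simeq V_{2,1}, \qquad Z_{1}^{\text{N}} = \min\{V_{1,1}, V_{2,1}\},
\end{equation}
while the SINR for $s_2$ is $V_{2,2}$ in all three schemes since only $s_1$ is shared during cooperation.

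Next I would use the deterministic chain
\begin{equation}
\max\{V_{1,1}, V_{2,1}\} \;\geq\; V_{2,1} \;\geq\; \min\{V_{1,1}, V_{2,1}\},
\end{equation}
which holds sample-by-sample for every realization of $(h_1, h_2)$. Because $x \mapsto \log_2(1+x)$ is strictly increasing on $[0,\infty)$, the same inequality holds after applying $\log_2(1+\cdot)$ to each term, and monotonicity of expectation then yields $\mathbb{E}[\tilde{R}_{\text{cN},1}^{\text{bi}}] \geq \mathbb{E}[\tilde{R}_{\text{cN},1}^{\text{uni}}] \geq \mathbb{E}[R_{\text{N},1}]$.

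To conclude, I would add the $s_2$-rate, which is identical across the three schemes ($R_{\text{cN},2}^{\text{bi}} = R_{\text{cN},2}^{\text{uni}} = R_{\text{N},2}$ as noted just after (\ref{eq:bi-cN:R1-near-user})), to each side. This preserves the chain and gives the claimed inequality for the sum rates. There is no real obstacle here: the only delicate point is justifying the reductions for $\tilde{Z}_{\text{cN},1}^{\text{bi}}$ and $\tilde{Z}_{\text{cN},1}^{\text{uni}}$, which is exactly the near-user argument already used for (\ref{eq:bi-cN:R1-near-user}) and which applies verbatim to the uni-directional expression $\min\{\max\{V_{1,1}, W_1\}, V_{2,1}\}$. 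Once these simplifications are in place, the proof is essentially a one-line consequence of $\max \geq \text{middle} \geq \min$ and monotonicity, so the argument should occupy at most a short paragraph in the paper.
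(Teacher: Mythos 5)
Your proposal is correct, and it is actually a more elementary route than the one the paper takes. You argue pointwise: for every channel realization the chain $\max\{V_{1,1},V_{2,1}\} \geq V_{2,1} \geq \min\{V_{1,1},V_{2,1}\}$ holds, so monotonicity of $\log_2(1+\cdot)$ and of expectation immediately gives the ordering of the $s_1$ ergodic rates, and adding the common $s_2$ rate finishes the job. The paper instead works at the level of expected values: it invokes Lemma \ref{lemma:expected_log_chi} to write the uni-directional rate exactly as $C_1(L_2/\sigma_n^2) - C_1(\gamma_2 L_2/\sigma_n^2)$, lower-bounds $\mathbb{E}[\max\{\cdot,\cdot\}]$ by the maximum of the two expectations and upper-bounds $\mathbb{E}[\min\{\cdot,\cdot\}]$ by their minimum, and then observes that the uni-directional rate is precisely the common middle term of those two bounds. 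Your sample-path argument avoids Lemma \ref{lemma:expected_log_chi} and the closed forms entirely and is, if anything, tighter (it does not pass through the loose $\mathbb{E}[\max]\geq\max\mathbb{E}$ and $\mathbb{E}[\min]\leq\min\mathbb{E}$ steps); the paper's route has the side benefit of producing the explicit expressions (\ref{eq:R1_uni-cN_closed}) and (\ref{eq:R2_closed}), which it reuses later, e.g., in the proof of Lemma \ref{lemma:Cx-Cax}. Both arguments are valid under the same proximity approximation that defines $\tilde{Z}_{\text{cN},1}^{\text{bi}}$ and $\tilde{Z}_{\text{cN},1}^{\text{uni}}$.
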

\begin{proof}
	See Appendix \ref{appendix:thm2}.
\end{proof}

To verify that bi-directional cooperative NOMA is applicable, 
comparison with OMA is also necessary.
Theorem \ref{theorem:ergodic_inequality_oma} shows that the ergodic capacity of bi-directional cooperative NOMA is better than that of OMA when the channel variances of two users are identical.
Lemma \ref{lemma:Cx-Cax} is introduced first before stating Theorem \ref{theorem:ergodic_inequality_oma}.

\begin{lemma}
	$C_1(x) - C_1(\beta x)$ is an increasing function of $x>0$, for any $0<\beta<1$.
	\label{lemma:Cx-Cax}
\end{lemma}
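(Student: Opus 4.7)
The plan is to recast $C_1$ probabilistically and then reduce monotonicity to a one-line pointwise derivative. Applying Lemma~\ref{lemma:integral_chi} with $a=x$ and $b=1$ (equivalently, starting from Lemma~\ref{lemma:expected_log_chi} and rewriting the chi-square variable as $2Y$ for $Y\sim\text{Exp}(1)$) yields the compact identity
\[
C_1(x)=\mathbb{E}\bigl[\log_2(1+xY)\bigr],\qquad Y\sim\text{Exp}(1).
\]
In particular, $C_1(x)-C_1(\beta x)=\mathbb{E}\!\left[\log_2\frac{1+xY}{1+\beta xY}\right]$, which is far easier to differentiate than the nested integral form in the definition of $C_1$.

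Next, I would differentiate under the expectation. A direct calculation of $\partial_x\log_2\!\frac{1+xY}{1+\beta xY}$ yields the single fraction $\dfrac{(1-\beta)\,Y}{(\ln 2)(1+xY)(1+\beta xY)}$; since $0<\beta<1$, this is strictly positive for every $Y>0$. The exchange of derivative and expectation is routine: on any closed subinterval of $(0,\infty)$ the pointwise derivative is dominated by the integrable random variable $(1-\beta)Y/\ln 2$, so dominated convergence applies. Combining these two observations gives $\frac{\mathrm{d}}{\mathrm{d}x}\bigl(C_1(x)-C_1(\beta x)\bigr)>0$ and thus the desired strict monotonicity.

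The only substantive step is the first one, namely the choice to view $C_1$ as an expectation rather than differentiating its integral definition directly. The latter route would produce an inequality of the form $\frac{1}{\beta}e^{1/(\beta x)}E_1(1/(\beta x))>e^{1/x}E_1(1/x)$ for the exponential-integral function $E_1$, which is considerably more cumbersome to verify than the pointwise log-comparison above. Working at the level of the log-integrands avoids this complication entirely, and no further obstacle arises.
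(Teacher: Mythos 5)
Your proof is correct, and it takes a genuinely different route from the paper's. The paper never differentiates $C_1$ here at all: it proves Lemma~\ref{lemma:Cx-Cax} by inserting the closed-form ergodic capacities (\ref{eq:closed_ergodic_cN_bi}) and (\ref{eq:closed_ergodic_cN_omni}) into the ordering $\mathbb{E}[\tilde{R}^{\text{bi}}_{\text{cN}}] \geq \mathbb{E}[\tilde{R}^{\text{uni}}_{\text{cN}}]$ from Theorem~\ref{thm:ergodic_comparison_NOMA}, cancelling common terms to arrive at $C_1(x_2)-C_1(\gamma_2 x_2)\geq C_1(x_1)-C_1(\gamma_2 x_1)$ with $x_1=\frac{L_1}{L_1+L_2}x_2$, and then observing that $L_1,L_2>0$ and $\gamma_2\in(0,1)$ can be chosen to realize any pair $0<x_1<x_2$ and any $\beta\in(0,1)$. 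Your argument instead uses the representation $C_1(x)=\mathbb{E}[\log_2(1+xY)]$ with $Y\sim\mathrm{Exp}(1)$, which is exactly Lemma~\ref{lemma:integral_chi} with $b=1$, and differentiates under the expectation; the pointwise derivative $(1-\beta)Y/\bigl(\ln 2\,(1+xY)(1+\beta xY)\bigr)$ is positive and dominated by the integrable envelope $(1-\beta)Y/\ln 2$, so the conclusion is immediate. Your route buys three things: it is self-contained, whereas the paper's version makes a lemma depend on a theorem (not circular, since Theorem~\ref{thm:ergodic_comparison_NOMA} is proved independently in Appendix~\ref{appendix:thm2} via the $\max$-versus-marginal bound, but an awkward ordering, especially given that Lemma~\ref{lemma:Cx-Cax} is later reused in Lemma~\ref{lemma:R_1+R_2_NearUserAssumption} and Theorem~\ref{theorem:ergodic_inequality_oma}); it yields \emph{strict} monotonicity rather than the non-strict inequality the paper's rearrangement provides; and it sidesteps the exponential-integral comparison you mention, which is indeed the unpleasant object one faces when differentiating the definition of $C_1$ directly, as in the proof of Lemma~3. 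The only thing the paper's route buys is the reuse of results already in hand.
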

\begin{proof}
	According to (\ref{eq:R1_uni-cN_closed}) and (\ref{eq:R2_closed}), the closed-form ergodic capacity of the uni-directional cooperative NOMA system becomes 
	\begin{equation}
	\mathbb{E}[\tilde{R}_{\text{cN}}^{\text{uni}}] = C_1 \Big(\frac{L_1}{\sigma_n^2}\Big) - C_1 \Big(\frac{\gamma_2 L_1}{\sigma_n^2}\Big) + C_1 \Big(\frac{\gamma_2 L_2}{\sigma_n^2}\Big).
	\label{eq:closed_ergodic_cN_omni}
	\end{equation}
	
	According to (\ref{eq:closed_ergodic_cN_bi}) and (\ref{eq:closed_ergodic_cN_omni}), the following inequality holds by Theorem \ref{thm:ergodic_comparison_NOMA},
	\begin{eqnarray}
	&& C_1 \Big(\frac{L_1}{\sigma_n^2}\Big) - C_1 \Big(\frac{\gamma_2 L_1}{\sigma_n^2}\Big) + C_1 \Big(\frac{L_2}{\sigma_n^2}\Big) - C_1 \Big(\frac{L_1 L_2}{(L_1+L_2)\sigma_n^2}\Big) + C_1 \Big(\frac{\gamma_2 L_1 L_2}{(L_1 + L_2)\sigma_n^2}\Big) \nonumber \\
	&&~~~~~\geq C_1 \Big(\frac{L_1}{\sigma_n^2}\Big) - C_1 \Big(\frac{\gamma_2 L_1}{\sigma_n^2}\Big) + C_1 \Big(\frac{\gamma_2 L_2}{\sigma_n^2}\Big) \\
	&\Leftrightarrow& C_1 \Big(\frac{L_2}{\sigma_n^2}\Big) - C_1 \Big(\frac{\gamma_2 L_2}{\sigma_n^2}\Big) - \bigg\{ C_1 \Big(\frac{L_1 L_2}{(L_1+L_2)\sigma_n^2}\Big) - C_1 \Big(\frac{\gamma_2 L_1 L_2}{(L_1 + L_2)\sigma_n^2}\Big) \bigg\} \geq 0.
	\label{eq:temp3}
	\end{eqnarray}
	Equation (\ref{eq:temp3}) holds for any $L_1, L_2>0$ and $0<\gamma_2<1$. 
	Thus, $C_1(x) - C_1(\beta x)$ is an increasing function of $x>0$.
\end{proof}

\begin{theorem}
	When $L_1 = L_2 = L$, $\mathbb{E}[\tilde{R}_{\text{cN}}^{\text{bi}}] > \mathbb{E}[R_{\text{O}}]$, provided $\frac{L}{\sigma_n^2}, \frac{\gamma_1 L}{\alpha_1 \sigma_n^2},\frac{\gamma_2 L}{\alpha_2 \sigma_n^2}>1$.
	\label{theorem:ergodic_inequality_oma}
\end{theorem}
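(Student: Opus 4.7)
The plan is to establish the claim by the chain
$\mathbb{E}[\tilde{R}_{\text{cN}}^{\text{bi}}] > C_1(L/\sigma_n^2) \ge \mathbb{E}[R_{\text{O}}]$,
where the strict leg comes from the bi-directional cooperation gain and the non-strict leg
is a Jensen bound on OMA. First I would specialize Theorem~\ref{thm:closed_ergodic_bi-cN}
to $L_1=L_2=L$; since $L_1 L_2/(L_1+L_2)=L/2$, the closed form rearranges to
\begin{equation*}
\mathbb{E}[\tilde{R}_{\text{cN}}^{\text{bi}}]
= C_1\!\Bigl(\tfrac{L}{\sigma_n^2}\Bigr)
+ \Bigl\{\bigl[C_1\!\bigl(\tfrac{L}{\sigma_n^2}\bigr) - C_1\!\bigl(\tfrac{\gamma_2 L}{\sigma_n^2}\bigr)\bigr]
- \bigl[C_1\!\bigl(\tfrac{L}{2\sigma_n^2}\bigr) - C_1\!\bigl(\tfrac{\gamma_2 L}{2\sigma_n^2}\bigr)\bigr]\Bigr\}.
\end{equation*}
Lemma~\ref{lemma:Cx-Cax}, invoked with $\beta=\gamma_2\in(0,1)$, says that
$y\mapsto C_1(y)-C_1(\gamma_2 y)$ is increasing on $(0,\infty)$. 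Evaluated at
$L/\sigma_n^2 > L/(2\sigma_n^2)$, the braced quantity is therefore nonnegative and in fact
strictly positive, yielding $\mathbb{E}[\tilde{R}_{\text{cN}}^{\text{bi}}] > C_1(L/\sigma_n^2)$.

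For the second leg I adopt the natural OMA formulation
$\mathbb{E}[R_{\text{O}}] = \alpha_1 C_1\!\bigl(\gamma_1 L/(\alpha_1\sigma_n^2)\bigr)
+ \alpha_2 C_1\!\bigl(\gamma_2 L/(\alpha_2\sigma_n^2)\bigr)$,
i.e.\ time-sharing with total energy conserved; this is consistent with the SNR conditions
appearing in the theorem. Writing $C_1(c)=\mathbb{E}[\log_2(1+c|g|^2)]$ with
$|g|^2\sim\mathrm{Exp}(1)$, and coupling the two summands to a common $|g|^2$ (legal because
only marginals enter each expectation), a pointwise application of Jensen's inequality to
the concave map $u\mapsto\log_2(1+u)$ gives
\begin{equation*}
\alpha_1 \log_2\!\Bigl(1+\tfrac{\gamma_1 L}{\alpha_1\sigma_n^2}|g|^2\Bigr)
+\alpha_2 \log_2\!\Bigl(1+\tfrac{\gamma_2 L}{\alpha_2\sigma_n^2}|g|^2\Bigr)
\le \log_2\!\Bigl(1+\tfrac{L}{\sigma_n^2}|g|^2\Bigr),
\end{equation*}
using $\gamma_1+\gamma_2=1$. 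Taking expectations yields
$\mathbb{E}[R_{\text{O}}]\le C_1(L/\sigma_n^2)$, which chained with the first leg completes
the argument.

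The main obstacle is justifying strict inequality at the cooperation step, and this is where
the stated SNR conditions earn their keep. Together they imply
$\gamma_i > \alpha_i \sigma_n^2/L > 0$, hence $\gamma_2\in(0,1)$, which is the non-degeneracy
required to make Lemma~\ref{lemma:Cx-Cax}'s monotonicity \emph{strict}: at $\gamma_2=1$, the
function $y\mapsto C_1(y)-C_1(\gamma_2 y)$ collapses to zero and the first leg degrades to
equality. Conceptually, the key insight is that bi-directional cooperation strictly beats the
single-user benchmark $C_1(L/\sigma_n^2)$, which Jensen already identifies as an upper bound
on the time-shared OMA rate; the algebra is then routine once
Theorem~\ref{thm:closed_ergodic_bi-cN} is substituted.
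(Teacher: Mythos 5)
Your proposal is correct in substance and uses the same decomposition of $\mathbb{E}[\tilde{R}_{\text{cN}}^{\text{bi}}]-\mathbb{E}[R_{\text{O}}]$ into two ``legs'' as the paper, but it handles the OMA leg by a genuinely different and arguably cleaner argument. The paper's Appendix C spends most of its effort proving that $C_1(x)$ is strictly concave for $x>1$ (via a second-derivative computation) and then applies Jensen's inequality to $C_1$ itself with weights $\alpha_1,\alpha_2$; this is precisely where the hypotheses $\frac{L}{\sigma_n^2},\frac{\gamma_1 L}{\alpha_1\sigma_n^2},\frac{\gamma_2 L}{\alpha_2\sigma_n^2}>1$ are consumed. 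You instead apply Jensen pointwise to $u\mapsto\log_2(1+u)$ inside the expectation defining $C_1$, which requires no concavity analysis of $C_1$ at all and in fact establishes $\mathbb{E}[R_{\text{O}}]\le C_1(L/\sigma_n^2)$ unconditionally --- so your route proves a slightly stronger statement than the theorem as stated. The trade-off is where strictness comes from: the paper gets the strict inequality from strict concavity of $C_1$ in the Jensen step (itself tacitly assuming $\gamma_1/\alpha_1\neq\gamma_2/\alpha_2$), whereas you must get it from the cooperation-gain bracket, and your assertion that this bracket is ``in fact strictly positive'' leans on \emph{strict} monotonicity of $y\mapsto C_1(y)-C_1(\gamma_2 y)$, which Lemma~\ref{lemma:Cx-Cax} as stated and proved only delivers in non-strict form. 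This is a genuine but small gap, fixable by a direct computation of $\frac{\mathrm{d}}{\mathrm{d}y}\bigl[C_1(y)-C_1(\gamma_2 y)\bigr]=C_1'(y)-\gamma_2 C_1'(\gamma_2 y)$, and it is comparable in rigor to the paper's own handling of strictness. Your first leg (the rearrangement of Theorem~\ref{thm:closed_ergodic_bi-cN} at $L_1=L_2=L$ and the invocation of Lemma~\ref{lemma:Cx-Cax} at the points $L/\sigma_n^2$ and $L/(2\sigma_n^2)$) coincides exactly with the paper's.
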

\begin{proof}
	See Appendix \ref{appendix:thm3}.
\end{proof}

It is reasonable that $\mathbb{E}[\tilde{R}_{\text{cN}}^{\text{bi}}]$ becomes much larger than $\mathbb{E}[R_{\text{O}}]$ as $L_2$ increases above $L_1$.
Also, $\mathbb{E}[\tilde{R}_{\text{cN}}^{\text{bi}}] - \mathbb{E}[R_{\text{O}}] > 0$ when $L_1=L_2$ by Theorem \ref{theorem:ergodic_inequality_oma}, so it can also be expected that $\mathbb{E}[\tilde{R}_{\text{cN}}^{\text{bi}}]$ could be still larger than $\mathbb{E}[R_{\text{O}}]$ when $L_1 = L_2+\delta$ for small $\delta>0$.
In Section \ref{sec:simulation}, numerical results show that bi-directional cooperative NOMA still has a rate gain compared to OMA even when the SIC user (user 2) experiences the weaker channel than the non-SIC user (user 1).

\section{Optimal Power Allocation Rule}
\label{sec:power_allocation}

Based on ergodic capacity analysis, we present the optimal power allocation rule for bi-directional cooperative NOMA.
Two optimization goals are considered: user fairness and sum-rate.
Note that the BS should know statistical CSI at least for the optimal power allocation, and we do not consider the no-CSIT case here.
Assume $L_1 < L_2$ in this section.

\subsection{User Fairness Problem}

As in \cite{NOMA:UserFairness:SPL-Timotheou,NOMA-CoMP:ICC-Han}, the max-min optimization problem is formulated for user fairness as
\begin{equation}
\gamma_{2}^{*}=\underset{0<\gamma_2 <1}{\arg \max}~\min (\mathbb{E}[\tilde{R}_{\text{cN},1}^{\text{bi}}], \mathbb{E}[R_{\text{cN},2}^{\text{bi}}]),
\label{eq:min-max-prob}
\end{equation}
where $\gamma_2^{*}$ is the optimal power ratio for user 2 and recall that $\gamma_1 + \gamma_2 =1$. 
The following lemma helps to solve the above optimization problem.

\begin{lemma}
	$\mathbb{E}[\tilde{R}_{\text{cN},1}^{\text{bi}}]$ is a decreasing function of $\gamma_2$ and $\mathbb{E}[R_{\text{cN},2}^{\text{bi}}]$ is an increasing function of $\gamma_2$.
	\label{lemma:R_1_R_2}
\end{lemma}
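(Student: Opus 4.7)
My plan is to exploit the elementary fact that $\mathbb{E}[\cdot]$ preserves monotonicity, so it suffices to establish pointwise monotonicity of the rate expressions in $\gamma_2$ for every realization of the channels $(h_1, h_2)$. Both halves of the lemma then reduce to a one-variable calculus check with the substitution $\gamma_1 = 1-\gamma_2$.

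First I would dispose of $\mathbb{E}[R_{\text{cN},2}^{\text{bi}}]$. Since $R_{\text{cN},2}^{\text{bi}} = \log_2(1 + V_{2,2}) = \log_2\bigl(1 + |h_2|^2 \gamma_2/\sigma_n^2\bigr)$ and $|h_2|^2 \geq 0$, the integrand is clearly non-decreasing in $\gamma_2$ for each channel realization (strictly increasing whenever $|h_2|^2 > 0$, which holds almost surely). The monotonicity passes through $\mathbb{E}[\cdot]$.

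Next I would turn to $\mathbb{E}[\tilde{R}_{\text{cN},1}^{\text{bi}}]$ with $\tilde{Z}_{\text{cN},1}^{\text{bi}} = \max\{V_{1,1}, V_{2,1}\}$. Rewriting each SINR as $V_{i,1} = (1-\gamma_2)\big/\bigl(\gamma_2 + \sigma_n^2/|h_i|^2\bigr)$, a direct differentiation in $\gamma_2$ (with $|h_i|^2$ held fixed) yields $\partial V_{i,1}/\partial \gamma_2 = -\bigl(1 + \sigma_n^2/|h_i|^2\bigr)\big/\bigl(\gamma_2 + \sigma_n^2/|h_i|^2\bigr)^{2} < 0$ on $(0,1)$. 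Hence both $V_{1,1}$ and $V_{2,1}$ are strictly decreasing in $\gamma_2$ for each channel realization; the pointwise maximum of two decreasing functions is decreasing, and composing with the increasing map $x \mapsto \log_2(1+x)$ preserves this. Taking expectation then delivers the claim.

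There is essentially no obstacle here: the result falls out of pointwise calculus. The only place one might trip is in handling the $\max$, but this is resolved by the trivial observation that the maximum of two decreasing functions is decreasing. If one preferred to work with the MRC combining form (\ref{eq:bi-cN:R1-MRC}) instead of the near-user simplification (\ref{eq:bi-cN:R1-near-user}), the same argument would go through verbatim after checking that each SINR summand $V_{i,1}$ and $W_i$ is monotone in $\gamma_2$ (the $W_i$ do not depend on $\gamma_2$ at all).
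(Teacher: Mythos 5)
Your proposal is correct and follows essentially the same route as the paper: establish pointwise monotonicity of each rate in $\gamma_2$ for every channel realization (using $\gamma_1=1-\gamma_2$), note that the maximum of decreasing functions is decreasing, and pass the monotonicity through the expectation. The only cosmetic difference is that you differentiate $V_{i,1}$ directly, whereas the paper rewrites $\log_2(1+V_{i,1})$ as $\log_2(1+|h_i|^2/\sigma_n^2)-\log_2(1+|h_i|^2\gamma_2/\sigma_n^2)$ and reads off the monotonicity from the second term.
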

\begin{proof}
	$\tilde{R}_{\text{cN},1}^{\text{bi}} = \max\{Z_1, Z_2\}$, where $Z_1 = \log_2 (1+ \frac{ |h_1|^2 \gamma_1 }{ |h_1|^2 \gamma_2  + \sigma_n^2}) = \log_2(1+\frac{|h_1|^2}{\sigma_n^2}) - \log_2(1+ \frac{|h_1|^2 \gamma_2}{\sigma_n^2})$, and $Z_2 = \log_2 ( 1+ \frac{ |h_2|^2 \gamma_1 }{ |h_2|^2 \gamma_2 + \sigma_n^2}) = \log_2(1+\frac{|h_2|^2}{\sigma_n^2}) - \log_2(1+ \frac{|h_2|^2 \gamma_2}{\sigma_n^2})$. 
	Since $\log_2(1+\frac{|h_i|^2\gamma_2}{\sigma_n^2})$ is an increasing function of $\gamma_2$,
	$Z_1$ and $Z_2$ are decreasing functions so $\mathbb{E}[\tilde{R}_{\text{cN},1}^{\text{bi}}]$ is a decreasing function of $\gamma_2$.
	Also, $R_{\text{cN},2}^{\text{bi}} = \log_2(1+\frac{|h_2|^2 \gamma_2}{\sigma_n^2})$, so $\mathbb{E}[R_{\text{cN},2}^{\text{bi}}]$ is an increasing function of $\gamma_2$.
\end{proof}

\begin{algorithm}
	\caption{Bisection method for power allocation of user fairness problem
		\label{algo:user_fairness}}
	\begin{algorithmic}[1]
		\State{Initialize $\gamma_{-}=0$, $\gamma_{+}=1$.}
		\While{$\gamma_{+} - \gamma_{-} \geq \epsilon$}{
			\State{$\gamma_2^{*} = (\gamma_{+} + \gamma_{-})/2$}
			\If{$\mathbb{E}[\tilde{R}_{\text{cN},1}^{\text{bi}}] < \mathbb{E}[R_{\text{cN},2}^{\text{bi}}]$}
			$\gamma_{+} = \gamma_{2}^{*}$
			\Else 
			~~$\gamma_{-} = \gamma_{2}^{*}$
			\EndIf
		}
		\EndWhile
	\end{algorithmic}
\end{algorithm}

By Lemma \ref{lemma:R_1_R_2}, the optimal solution of (\ref{eq:min-max-prob}) is directly obtained when $\mathbb{E}[\tilde{R}_{\text{cN},1}^{\text{bi}}] = \mathbb{E}[R_{\text{cN},2}^{\text{bi}}]$.
Since $\mathbb{E}[\tilde{R}_{\text{cN},1}^{\text{bi}}], \mathbb{E}[{R}_{\text{cN},2}^{\text{bi}}] \geq 0$ for any $\gamma_1, \gamma_2 \in [0,1]$, $\mathbb{E}[\tilde{R}_{\text{cN},1}^{\text{bi}}]=0$ at $\gamma_1=0$, and $\mathbb{E}[{R}_{\text{cN},2}^{\text{bi}}]=0$ at $\gamma_2=0$, the solution of $\mathbb{E}[\tilde{R}_{\text{cN},1}^{\text{bi}}] = \mathbb{E}[{R}_{\text{cN},2}^{\text{bi}}]$ would satisfy $0\leq \gamma_2^{*} \leq 1$.
However, the closed-form solution of $\gamma_2^{*}$ is difficult to derive because of the expectation operations.
Therefore, the bisection method is used to solve (\ref{eq:min-max-prob}).
Algorithm \ref{algo:user_fairness} shows the detail.

\subsection{Max-Sum-Rate Problem}

To maximize the sum-rate of NOMA system, allocating all power to the strong user is a simple solution.
However, it destroys user-fairness completely; the sum-rate performance is usually studied under a minimum rate constraint as in \cite{NOMA-CoMP:ICC-Han}. 
The problem can be formulated as 
\begin{align}
&\gamma_{2}^{*}=\underset{0<\gamma_2 <1}{\arg \max}~\mathbb{E}[\tilde{R}_{\text{cN},1}^{\text{bi}}] + \mathbb{E}[R_{\text{cN},2}^{\text{bi}}] \label{eq:max-sum-rate_prob}\\
&\text{s.t.} \ \text{min}(\mathbb{E}[\tilde{R}_{\text{cN},1}^{\text{bi}}], \mathbb{E}[R_{\text{cN},2}^{\text{bi}}])\geq R_{t}
\label{eq:max-sum-rate_const}
\end{align}
where $R_t$ is the minimum data rate constraint for user fairness.
Lemma \ref{lemma:R_1+R_2_NearUserAssumption} is introduced for solving the above optimization problem.

\begin{lemma}
	$\mathbb{E}[\tilde{R}_{\text{cN},1}^{\text{bi}}] + \mathbb{E}[{R}_{\text{cN},2}^{\text{bi}}]$ is a decreasing function of $\gamma_2$, for any $L_1, L_2>0$.
	\label{lemma:R_1+R_2_NearUserAssumption}
\end{lemma}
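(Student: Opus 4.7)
The plan is to bypass the closed-form expression of Theorem~\ref{thm:closed_ergodic_bi-cN} and instead establish the stronger pathwise statement: for every realization of $(|h_1|^2,|h_2|^2)$, the random quantity $\tilde{R}_{\text{cN},1}^{\text{bi}} + R_{\text{cN},2}^{\text{bi}}$ is non-increasing in $\gamma_2$. Monotonicity of the expectation then follows by taking an expectation of a pointwise inequality, which avoids having to differentiate the five-term expression (\ref{eq:closed_ergodic_cN_bi}) and, more importantly, avoids invoking any assumption on the relative sizes of $L_1$ and $L_2$.

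The key enabling observation is that the map $x \mapsto x\gamma_1/(x\gamma_2+\sigma_n^2)$ is strictly increasing on $x>0$ (its derivative is $\gamma_1\sigma_n^2/(x\gamma_2+\sigma_n^2)^2>0$), so $V_{i,1}$ is monotone in $|h_i|^2$ and therefore $\tilde{Z}_{\text{cN},1}^{\text{bi}} = \max\{V_{1,1},V_{2,1}\} = V_{i^{\star},1}$ where $i^{\star} = \arg\max_i |h_i|^2$. Crucially, the event $\{i^{\star}=1\}$ does not involve $\gamma_2$, so I can split the analysis cleanly into the two cases $|h_1|^2 \geq |h_2|^2$ and $|h_1|^2 < |h_2|^2$ and argue on each.

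In the case $|h_1|^2 < |h_2|^2$, a short algebraic simplification using $\gamma_1+\gamma_2=1$ gives $\log_2(1+V_{2,1}) + \log_2(1+V_{2,2}) = \log_2\bigl(1+|h_2|^2/\sigma_n^2\bigr)$, which is constant in $\gamma_2$. In the case $|h_1|^2 \geq |h_2|^2$, rewriting $1+V_{1,1} = (|h_1|^2+\sigma_n^2)/(|h_1|^2\gamma_2+\sigma_n^2)$ reduces the $\gamma_2$-dependent part of the conditional sum rate to $\log_2(|h_2|^2\gamma_2+\sigma_n^2) - \log_2(|h_1|^2\gamma_2+\sigma_n^2)$. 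Its $\gamma_2$-derivative is $(\ln 2)^{-1}\bigl[|h_2|^2/(|h_2|^2\gamma_2+\sigma_n^2) - |h_1|^2/(|h_1|^2\gamma_2+\sigma_n^2)\bigr]$, and since the map $x\mapsto x/(x\gamma_2+\sigma_n^2)=1/(\gamma_2+\sigma_n^2/x)$ is increasing in $x>0$, the hypothesis $|h_1|^2 \geq |h_2|^2$ makes this derivative non-positive.

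The only subtle step is disposing of the $\max$ in $\tilde{Z}_{\text{cN},1}^{\text{bi}}$, but the monotonicity of $V_{i,1}$ in $|h_i|^2$ makes that routine, as it turns the $\max$ into a $\gamma_2$-independent selection of the user with the larger instantaneous channel gain. The rest is a one-variable derivative check, after which integrating over the Rayleigh channel densities preserves the inequality and delivers the lemma for all $L_1,L_2>0$.
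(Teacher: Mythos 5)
Your proof is correct, and it takes a genuinely different route from the paper's. The paper works entirely at the level of the closed-form ergodic capacity (\ref{eq:closed_ergodic_cN_bi}): only the two terms $-C_1(\gamma_2 L_1/\sigma_n^2) + C_1\bigl(\gamma_2 L_1 L_2/((L_1+L_2)\sigma_n^2)\bigr)$ depend on $\gamma_2$, and their sum is $-\bigl[C_1(x)-C_1(\beta x)\bigr]$ with $x\propto\gamma_2$ and $\beta=L_2/(L_1+L_2)\in(0,1)$, so Lemma \ref{lemma:Cx-Cax} finishes the job in one line. You instead prove the stronger pathwise statement that $\tilde{R}_{\text{cN},1}^{\text{bi}}+R_{\text{cN},2}^{\text{bi}}$ is non-increasing in $\gamma_2$ for every channel realization, using the fact that the $\max$ in $\tilde{Z}_{\text{cN},1}^{\text{bi}}$ selects the stronger user independently of $\gamma_2$; your computation that $\log_2(1+V_{2,1})+\log_2(1+V_{2,2})=\log_2(1+|h_2|^2/\sigma_n^2)$ when user 2 is stronger, and that the residual $\gamma_2$-dependence $\log_2(|h_2|^2\gamma_2+\sigma_n^2)-\log_2(|h_1|^2\gamma_2+\sigma_n^2)$ is non-increasing when $|h_1|^2\ge|h_2|^2$, both check out. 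Your argument is more elementary and more self-contained: it bypasses Theorem \ref{thm:closed_ergodic_bi-cN}, the function $C_1$, and Lemma \ref{lemma:Cx-Cax} entirely (the paper's proof of that lemma is itself somewhat indirect, being extracted from the inequality of Theorem \ref{thm:ergodic_comparison_NOMA}), and it exposes the structural reason the lemma holds, namely that on the event where the SIC user is stronger the sum rate degenerates to the single-user capacity and is power-split-invariant. The one point you should make explicit is strictness: your pathwise bound is only ``non-increasing'' (constant on $\{|h_1|^2<|h_2|^2\}$), so to match the paper's strict inequality you must add that the event $\{|h_1|^2>|h_2|^2\}$ has positive probability (indeed $L_1/(L_1+L_2)$) and that the summand is strictly decreasing there, whence the expectation is strictly decreasing. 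This is a one-sentence addition, not a gap in substance.
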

\begin{proof}
	According to (\ref{eq:closed_ergodic_cN_bi}), 
	\begin{eqnarray}
	\frac{\mathrm{d}}{\mathrm{d}\gamma_2}\big( \mathbb{E}[\tilde{R}_{\text{cN},1}^{\text{bi}}] + \mathbb{E}[{R}_{\text{cN},2}^{\text{bi}}] \big) = \frac{\mathrm{d}}{\mathrm{d}\gamma_2} \bigg\{ - C_1 \Big(\frac{\gamma_2 L_1}{\sigma_n^2}\Big) + C_1 \Big(\frac{\gamma_2 L_1 L_2}{(L_1 + L_2)\sigma_n^2}\Big) \bigg\} < 0,
	\end{eqnarray}
	by Lemma \ref{lemma:Cx-Cax} and $0<\frac{L_1}{L_1+L_2}<1$.
\end{proof}

By Lemmas \ref{lemma:R_1_R_2} and \ref{lemma:R_1+R_2_NearUserAssumption}, the optimal solution of (\ref{eq:max-sum-rate_prob}) is obtained when $\mathbb{E}[{R}_{\text{cN},2}^{\text{bi}}] = R_t$.
Similar to the user fairness problem, the bisection method can be used to find $\gamma_2^{*}$ of (\ref{eq:max-sum-rate_prob}).
However, if the assumption that both users are located nearby is not satisfied, Lemma \ref{lemma:R_1+R_2_NearUserAssumption} does not hold anymore. 
In this case, we consider some special cases depending on the relative amounts of $W_{1}$ and $W_{2}$ compared to $V_{1,1}$ and $V_{2,1}$.
Suppose $V_{1,1} < V_{2,1}$, then Lemma \ref{lemma:R_1+R_2_NearUserAssumption} holds when $V_{2,1} < W_1$. 
However, when $V_{1,1} < W_1 < V_{2,1}$, $R_{\text{cN},1}^{\text{bi}}$ becomes $\log_2(1+\frac{|g_c|^2}{\sigma_n^2})$, so $\mathbb{E}[R_{\text{cN},1}^{\text{bi}}]$ does not depend on the power allocation ratio.
Then, $\mathbb{E}[R_{\text{cN},1}^{\text{bi}}]+\mathbb{E}[R_{\text{cN},2}^{\text{bi}}]$ becomes an increasing function of $\gamma_2$ because $\mathbb{E}[R_{\text{cN},2}^{\text{bi}}]$ does.
On the other hand, when $W_1 < V_{1,1}$, $R_{\text{cN},1}^{\text{bi}}$ becomes $\log_2(1+\frac{\gamma_1|h_1|^2}{\gamma_2|h_2|^2 + \sigma_n^2})$, and it can be proven that $\mathbb{E}[R_{\text{cN},1}^{\text{bi}}]+\mathbb{E}[R_{\text{cN},2}^{\text{bi}}]$ is a decreasing function of $\gamma_2$ in a way similar to the proof of Lemma \ref{lemma:R_1+R_2_NearUserAssumption}.

The situation where $V_{1,1} > V_{2,1}$ can be also considered similar to $V_{1,1} < V_{2,1}$.
However, this case is not applied to solve the max-sum-rate problem (\ref{eq:max-sum-rate_prob}). 
The reason is that $\mathbb{E}[V_{1,1}], \mathbb{E}[V_{2,1}], \mathbb{E}[W_1]$ and $\mathbb{E}[W_2]$ are used instead of $V_{1,1}, V_{2,1}, W_1$ and $W_2$ in the statistical CSIT case.
This approximation does not consider the case of $V_{1,1} > V_{2,1}$, because we assume $L_2 > L_1$ first so $\mathbb{E}[V_{1,1}] < \mathbb{E}[V_{2,1}]$ always.
Thus, this approximation makes $\gamma_2^{*}$ of (\ref{eq:max-sum-rate_prob}) a suboptimal solution.

\begin{algorithm}
	\caption{Bisection method of power allocation for max-sum-rate problem
		\label{algo:max_sum_rate}}
	\begin{algorithmic}[1]
		\State{Initialize $\gamma_{-}=0$, $\gamma_{+}=1$.}
		\While{$\gamma_{+} - \gamma_{-} \geq \epsilon$}{
			\State{$\gamma_2^{*} = (\gamma_{+} + \gamma_{-})/2$}
			
			\If{$R_0 > R_t$}
			{
				
				\If{$\min \{ \mathbb{E}[R_{\text{cN},1}^{\text{bi}}], \mathbb{E}[R_{\text{cN},2}^{\text{bi}}] \} \leq R_t$}
				{
					\If{$\mathbb{E}[R_{\text{cN},1}^{\text{bi}}] > \mathbb{E}[R_{\text{cN},2}^{\text{bi}}]$} {$\gamma_{-} = \gamma_2^{*}$}
					\Else 
					{
						$\gamma_{+} = \gamma_2^{*}$
					}
					\EndIf
				}
				\Else
				{
					\If{$\mathbb{E}[V_{1,1}] < \mathbb{E}[W_1] < \mathbb{E}[V_{2,1}]$}
					{
						$\gamma_{-} = \gamma_2^{*}$
					}
					\Else
					{
						$\gamma_{+} = \gamma_2^{*}$
					}
					\EndIf
				}
				\EndIf
				
			}
			\Else
			{
				System outage occurs.
			}
			\EndIf
			
		}
		\EndWhile
	\end{algorithmic}
\end{algorithm}

In summary, if $V_{1,1} < W_1 < V_{2,1}$, $\mathbb{E}[R_{\text{cN},1}^{\text{bi}}] + \mathbb{E}[R_{\text{cN},2}^{\text{bi}}]$ is an increasing function of $\gamma_2$, so the solution is obtained when $\mathbb{E}[R_{\text{cN},1}^{\text{bi}}] = R_t$.
If not, $\mathbb{E}[R_{\text{cN},1}^{\text{bi}}] + \mathbb{E}[R_{\text{cN},2}^{\text{bi}}]$ is a decreasing function of $\gamma_2$, and $\gamma_2^{*}$ is found when $\mathbb{E}[R_{\text{cN},2}^{\text{bi}}] = R_t$.
Based on these behaviors, the suboptimal bisection method for maximizing the sum-rate of bi-directional cooperative NOMA in the statistical CSIT case is presented in Algorithm \ref{algo:max_sum_rate}.
Note that the outage event occurs when the minimum rate constraint (\ref{eq:max-sum-rate_const}) is not satisfied. 
In addition, according to Lemma \ref{lemma:R_1_R_2}, we can recognize that $R_0 = \mathbb{E}[R^{\text{bi}}_{\text{cN},1}] = \mathbb{E}[R^{\text{bi}}_{\text{cN},2}]$ should be larger than $R_t$; otherwise, the system cannot avoid outage.

\section{Outage Probability}
\label{sec:outage_prob}

When the targeted data rates, $R_{t,1}$ and $R_{t,2}$, are determined by the users' QoS requirements, the outage probability is an important performance criterion.
If the outage event occurs at the non-SIC user, the SIC user does not use the cooperation signal, and outage of SIC user does not allow the cooperation from the SIC user to the non-SIC user.
The outage probability at the non-SIC user (user 1) in bi-directional cooperative NOMA is given by
\begin{equation}
P_{\text{cN},1}^{\text{bi}} = P\{V_{1,1} < \epsilon_{1},~V_{2,1} < \epsilon_{1}\} + P\{ \max\{ V_{1,1},~W_1 \} < \epsilon_{1},~V_{2,1} > \epsilon_{1} \},
\label{eq:outage1_bi-cN}
\end{equation}
where $\epsilon_i = 2^{R_{t,i}}-1$.

Again, $|h_i|^2=L_i X_i/2$ for $i \in \{1,2\}$.
The first term of (\ref{eq:outage1_bi-cN}) becomes
\begin{equation}
P\bigg\{ \frac{\gamma_1 L_1 X_1}{\gamma_2 L_1 X_1 + 2\sigma_n^2} < \epsilon_1 \bigg\} \cdot P\bigg\{ \frac{\gamma_1 L_2 X_2}{\gamma_2 L_2 X_2 + 2\sigma_n^2} < \epsilon_1 \bigg\} 
=
\begin{cases}
1 & \text{if } \frac{\gamma_1}{\gamma_2}<\epsilon_1 \\
(1- e^{-\xi / L_1}) (1- e^{ - \xi / L_2 } ) & \text{otherwise} \\
\end{cases}
\label{eq:outage1_bi-cN_temp1}
\end{equation}
where $\epsilon_i = 2^{R_{ti}}-1$ and $\xi = \frac{\sigma_n^2 \epsilon_1}{\gamma_1 - \epsilon_1 \gamma_2}$, and the second term of (\ref{eq:outage1_bi-cN}) becomes
\begin{eqnarray}
&&P\bigg\{ \frac{\gamma_1 L_1 X_1}{\gamma_2 L_1 X_1 + 2\sigma_n^2} < \epsilon_1 \bigg\} \cdot P\bigg\{ \frac{L_c X_c}{2\sigma_n^2} < \epsilon_1 \bigg\} \cdot P\bigg\{ \frac{\gamma_1 L_2 X_2}{\gamma_2 L_2 X_2 + 2\sigma_n^2} > \epsilon_1 \bigg\} \\
&&~~~=
\begin{cases}
0 & \text{if } \frac{\gamma_1}{\gamma_2}<\epsilon_1 \\
( 1- e^{ - \xi/L_1} ) ( 1- e^{ -\sigma_n^2 \epsilon_1/L_c } ) e^{ -\xi / L_2} & \text{otherwise}
\end{cases}
\label{eq:outage1_bi-cN_temp2}
\end{eqnarray}

By (\ref{eq:outage1_bi-cN}), (\ref{eq:outage1_bi-cN_temp1}) and (\ref{eq:outage1_bi-cN_temp2}), $P_{\text{cN},1}^{\text{bi}}$ is given by 
\begin{equation}
P_{\text{cN},1}^{\text{bi}} =
\begin{cases}
1 & \text{if } \frac{\gamma_1}{\gamma_2}<\epsilon_1 \\
(1- e^{ - \xi / L_1} ) (1- e^{ -\sigma_n^2 \epsilon_1/L_c - \xi / L_2} ) & \text{otherwise}
\end{cases}.
\end{equation}
$P_{\text{cN},1}^{\text{bi}}$ conditioned on $\gamma_1 / \gamma_2>\epsilon_1$ is approximated in the high SNR region by
\begin{equation}
P_{\text{cN},1}^{\text{bi}} \approx 
\frac{\sigma_n^2 \epsilon_1}{L_1 (\gamma_1 - \epsilon_1 \gamma_2)} \cdot \Big( \frac{\sigma_n^2 \epsilon_1}{L_c} + \frac{\sigma_n^2 \epsilon_1}{L_2(\gamma_1-\epsilon_1\gamma_2)} \Big),
\label{eq:outage1_bi-cN_highSNR}
\end{equation}
and it indicates user 1 achieves a diversity order of 2.

Here, the outage probability of user 1 in bi-directional cooperative NOMA is the same as that of uni-directional cooperative NOMA, i.e., $P_{\text{cN},1}^{\text{bi}} = P_{\text{cN},1}^{\text{uni}}$, because user 1 receives the cooperation signal from user 2 also in uni-directional cooperative NOMA.
On the other hand, conventional NOMA and OMA are different. 
For conventional NOMA, the outage probability of user 1 of conventional NOMA, $P_{\text{N},1}$, is given by 
\begin{eqnarray}
P_{\text{N},1} = P\Big\{ \frac{\gamma_1 L_1 X_1}{\gamma_2 L_1 X_1 + 2\sigma_n^2} < \epsilon_1 \Big\} &=& 1 - e^{ -\xi / L_1} \label{eq:outage1_conv-N} \\
&\approx& \frac{\sigma_n^2 \epsilon_1}{L_1(\gamma_1 - \epsilon_1\gamma_2)},
\label{eq:outage1_conv-N_highSNR}
\end{eqnarray}
and it just has a diversity order of 1. 
Equation (\ref{eq:outage1_conv-N_highSNR}) is achieved by a high-SNR approximation.

Likewise, user 1 of OMA also achieves a diversity order of 1, as shown in (\ref{eq:outage1_OMA}) and (\ref{eq:outage1_OMA_highSNR}).
\begin{eqnarray}
P_{\text{O},1} = P\Big\{ \frac{\gamma_1L_1X_1}{2\alpha_1 \sigma_n^2} < \epsilon_{O,1} \Big\} &=& 1-\exp\Big\{ -\frac{\alpha_1 \sigma_n^2 \epsilon_{O,1}}{\gamma_1L_1} \Big\} \label{eq:outage1_OMA}\\
&\approx& \frac{\alpha_1 \sigma_n^2 \epsilon_{O,1}}{\gamma_1L_1},
\label{eq:outage1_OMA_highSNR}
\end{eqnarray}
where $\epsilon_{O,i} = 2^{R_{t,i}/\alpha_i}-1$ for $i \in \{1,2\}$.
We can easily note that bi- and uni-directional cooperative NOMA systems can achieve multiuser diversity.

Next, consider the outage probability of user 2 in bi-directional cooperative NOMA system.
If user 1 avoids the outage event, user 2 can use the cooperation signal transmitted from user 1 for its SIC process.
However, if the user 1's data rate is less than $R_{t,1}$, the cooperation from user 1 to user 2 cannot be performed. 
The outage probability of user 2 in bi-directional cooperative NOMA is given by
\begin{eqnarray}
P_{\text{cN},2}^{\text{bi}} &=& P\{ (V_{2,2} < \epsilon_{2} \cup V_{2,1} < \epsilon_{1}),~V_{1,1} < \epsilon_{1} \} \nonumber \\
&&~~+ P\{ (V_{2,2} < \epsilon_{2} \cup \max \{V_{2,1},~W_2\} < \epsilon_{1}),~V_{1,1} > \epsilon_{1} \}
\label{eq:outage2_bi-cN}
\end{eqnarray}

The first term of (\ref{eq:outage2_bi-cN}) becomes 
\begin{eqnarray}
&&(1- P\{Z_2 > \epsilon_{2},~ V_{2,1} > \epsilon_{1} \})P\{V_{1,1} < \epsilon_{1}\} \\
&&~~~= \bigg(1- P\Big\{ \frac{\gamma_2L_2X_2}{2\sigma_n^2} > \epsilon_2,~ \frac{\gamma_1 L_2 X_2}{\gamma_2 L_2 X_2 + 2\sigma_n^2} > \epsilon_1 \Big\} \bigg) P\Big\{ \frac{\gamma_1 L_1 X_1}{\gamma_2 L_1 X_1 + 2\sigma_n^2} < \epsilon_1 \Big\} 
\label{eq:outage2_bi-cN_temp1}
\end{eqnarray}

Similarly, the second term of (\ref{eq:outage2_bi-cN}) becomes 

\begin{eqnarray}
&&\big[ P\{ V_{2,2} < \epsilon_{2}\} + P\{\max\{V_{2,1},~W_2\} < \epsilon_{1} \} \nonumber \\
&&~~~- P\{ V_{2,2} < \epsilon_{2} \cap \max\{V_{2,1},~W_2\} < \epsilon_{1} \} \big] \cdot P\{V_{1,1} > \epsilon_{1}\},
\label{eq:outage1_bi-cN_2ndterm}
\end{eqnarray}
where 
\begin{eqnarray}
&&P\{\max\{V_{2,1},~W_2\} < \epsilon_{1} \} = P\{V_{2,1} < \epsilon_{1}\} P\{W_2 < \epsilon_{1}\} \\
&&~~~~~~~~= 
\begin{cases}
1 & \text{if } \frac{\gamma_1}{\gamma_2}<\epsilon_1 \\
( 1- e^{ -\xi / L_2} ) ( 1- e^{ -\sigma_n^2\epsilon_1/L_c } ) & \text{otherwise} \\
\end{cases}
\label{eq:outage2_bi-cN_temp2}
\end{eqnarray}
and
\begin{equation}
P\{ V_{2,2} < \epsilon_{2} \cap \max\{V_{2,1},~W_2\} < \epsilon_{1} \}
=
\begin{cases}
1 & \text{if } \frac{\gamma_1}{\gamma_2}<\epsilon_1 \\
( 1- e^{ -\xi / L_2} ) ( 1- e^{ -\sigma_n^2\epsilon_1/L_c } ) & \text{else if } \frac{\epsilon_2}{\gamma_2} > \frac{\epsilon_1}{\gamma_1-\epsilon_1\gamma_2} \\
( 1- e^{ -\sigma_n^2 \epsilon_2/\gamma_2 L_2 } ) ( 1- e^{ -\sigma_n^2\epsilon_1/L_c } ) & \text{else}
\end{cases}
\label{eq:outage2_bi-cN_temp3}
\end{equation}

Thus, according to (\ref{eq:outage2_bi-cN}), (\ref{eq:outage2_bi-cN_temp1}), (\ref{eq:outage2_bi-cN_temp2}), and (\ref{eq:outage2_bi-cN_temp3}), $P_{\text{cN},2}^{\text{bi}}$ is given by
\begin{equation}
P_{\text{cN},2}^{\text{bi}} = 
\begin{cases}
1 & \text{if } \frac{\gamma_1}{\gamma_2}<\epsilon_1 \\

1 - e^{ -\sigma_n^2 \epsilon_2 / \gamma_2 L_2 } \approx \frac{\sigma_n^2 \epsilon_2}{\gamma_2 L_2}
& \text{else if } \frac{\epsilon_2}{\gamma_2} > \frac{\epsilon_1}{\gamma_1-\epsilon_1\gamma_2} \\

( 1- e^{ - \xi / L_2 } ) - e^{ - \sigma_n^2 \epsilon_1 / L_c - \xi / L_1 } ( e^{ - \sigma_n^2 \epsilon_2 / \gamma_2 L_2 } - e^{ - \xi / L_2 } ) \\
\approx \frac{\sigma_n^2 \epsilon_1}{L_2(\gamma_1 - \epsilon_1 \gamma_2)} - \Big( \frac{\sigma_n^2 \epsilon_1}{L_2(\gamma_1 - \epsilon_1 \gamma_2)} - \frac{\sigma_n^2 \epsilon_2}{\gamma_2 L_2} \Big) \Big( 1 - \frac{\sigma_n^2 \epsilon_1}{L_c} - \frac{\sigma_n^2	\epsilon_1}{L_1(\gamma_1 - \epsilon_1 \gamma_2)} \Big)
& \text{else}
\end{cases}.  \label{eq:outage2_bi-cN_closed} 
\end{equation}
The approximations in (\ref{eq:outage2_bi-cN_closed}) are obtained in the high SNR region.
Unlike $P_{\text{cN},1}^{\text{bi}}$, $P_{\text{cN},2}^{\text{bi}}$ has a diversity order of 1. 
The reason is that user 2 should perform SIC before its data decoding, even though the cooperation signal from user 1 could help its SIC process. 
Uni-directional cooperative NOMA does not allow user 2 to receive the cooperation signal, so its outage probability is obtained by

\begin{eqnarray}
P_{\text{cN},2}^{\text{uni}} &=& P\{ V_{2,2} < \epsilon_{2} \cup V_{2,1} < \epsilon_{1} \} = 1 - P\{ V_{2,2} > \epsilon_{2},~ V_{2,1} > \epsilon_{1} \} \\
&=& \begin{cases}
1 & \text{if } \frac{\gamma_1}{\gamma_2}<\epsilon_1 \\
1 - \exp\{-\frac{\sigma_n^2 \epsilon_2}{\gamma_2 L_2}\} \approx \frac{\sigma_n^2 \epsilon_2}{\gamma_2 L_2} & \text{else if }\frac{\epsilon_2}{\gamma_2} > \frac{\epsilon_1}{\gamma_1-\epsilon_1 \gamma_2}\\
1 - \exp\{ -\frac{\sigma_n^2 \epsilon_1}{L_2 (\gamma_1 - \epsilon_1 \gamma_2)} \} \approx \frac{\sigma_n^2 \epsilon_1}{L_2 (\gamma_1 - \epsilon_1 \gamma_2)} & \text{else}
\end{cases}.
\label{eq:outage2_uni-cN_closed}
\end{eqnarray}
The high SNR approximation in (\ref{eq:outage2_uni-cN_closed}) shows user 2 in uni-directional cooperative NOMA realizes a diversity order of 1, the same as bi-directional cooperative NOMA.
Conditioned on $\frac{\gamma_1}{\gamma_2}>\epsilon_1$, $P_{\text{cN},2}^{\text{bi}} = P_{\text{cN},2}^{\text{uni}}$ when $\frac{\epsilon_2}{\gamma_2} > \frac{\epsilon_1}{\gamma_1-\epsilon_1 \gamma_2}$, but $P_{\text{cN},2}^{\text{bi}}$ has a power gain compared to $P_{\text{cN},2}^{\text{uni}}$ when $\frac{\epsilon_2}{\gamma_2} < \frac{\epsilon_1}{\gamma_1-\epsilon_1 \gamma_2}$. 
Since there is no cooperation signal from user 1 to user 2 in uni-directional cooperative NOMA,  $P_{\text{cN},2}^{\text{uni}}$ is the same as the outage probability of user 2 of conventional NOMA, $P_{\text{cN},2}^{\text{uni}} = P_{\text{N},2}$.
Meanwhile, the outage probability of the user 2 of OMA is given by
\begin{equation}
P_{\text{O},2} = 1 - \exp \Big\{ -\frac{\alpha_2 \sigma_n^2 \epsilon_{O,2}}{\gamma_2 L_2} \Big\} \approx \frac{\alpha_2 \sigma_n^2 \epsilon_{O,2}}{\gamma_2 L_2},
\end{equation}
and it also realizes a diversity order of 1.

In summary, bi-directional cooperative NOMA achieves a power gain for the SIC user (user 2) conditioned on $\frac{\epsilon_2}{\gamma_2} < \frac{\epsilon_1}{\gamma_1-\epsilon_1 \gamma_2}$ and $\frac{\gamma_1}{\gamma_2} > \epsilon_1$, compared to uni-directional cooperative NOMA and conventional NOMA.
On the other hand, the non-SIC user of bi- or uni-directional cooperative NOMA scheme achieves better diversity order than that of conventional NOMA and OMA, as shown by (\ref{eq:outage1_bi-cN_highSNR}).

\section{Extension to Multi-User Scenario}
\label{sec:multi-user}

Thus far, we considered the two-user model for bi-directional cooperative NOMA.
However, the proposed system can be extended to the multi-user scenario. 
Assume that the BS serves $K$ users by NOMA. 
Bi-directional cooperative NOMA consists of $K+2$ phases. 
The first and the second phases correspond to direct transmission and channel information exchange, respectively, and others are for cooperation. 
In the statistical-CSIT case, $L_1 < L_2 < \cdots < L_K$ is assumed, and suppose that $\gamma_1 > \gamma_2 > \cdots > \gamma_K$ for both cases of statistical CSIT and no CSIT.  
All phases of bi-directional cooperative NOMA in the multi-user scenario are explained below:

\subsubsection{Direct Transmission Phase} 
The BS transmits the superpositioned signal to all users.
The received signal at the user $i$ is 
\begin{equation}
r_i = h_i \sum_{k=1}^{K} \sqrt{\gamma_k}s_k + n_i.
\end{equation}

\subsubsection{Channel Information Exchange Phase}
Users exchange their channel information to determine the direction of cooperation.
As we will see in the $j$-th cooperative phase, the cooperation signal is transmitted at the user $i_0$, whose channel gain is the strongest among users $j,\cdots,K$. 
Therefore, users should know the order of channel gains of all users and this phase requires $K$ time slots.
For each slot, a user delivers its channel information to the others via short-range communications.
Note that in the proposed scheme in $K$-user scenario, each time slot is actually reduced for exchange of channel information when the users are crowded.

\textit{Remark}:
As mentioned earlier in Section \ref{subsec:channel_exchange_phase}, the CSI exchange step causes additional delay and overhead, and those penalties grow as $K$ increases. 
However, even without the CSI exchange step, this problem also arises in the existing systems where the BS receives CSI feedbacks.
Large $K$ also causes a huge computational burden for SIC and requires large power budget to enable multiple steps of SIC. 
Therefore, only two or four-user NOMA signaling has been considered in practical system, and the proposed scheme can be applied well with appropriate $K$ for practical scenarios.

\subsubsection{The $j$-th Cooperative Phase}
The cooperation phase consists of $K$ phases. 
Cooperation is performed on signal-by-signal basis, i.e., decoding of $s_j$ is performed in the $j$-th cooperative phase.
Users $j,\cdots,K$ decode $s_j$, and this phase corresponds to one of the SIC steps, especially for users $j+1,\cdots,K$.
Let $V_{i,j}$ be the SINR for decoding $s_j$ at user $i$, where $i\geq j$, and
\begin{equation}
V_{i,j} = \frac{|h_i|^2 \gamma_j}{|h_i|^2\sum_{k=j+1}^{K}\gamma_k + \sigma_n^2}.
\end{equation}
Therefore, cooperation among users $j,\cdots,K$ for improved decoding of $s_j$ is possible in the $j$-th cooperative phase.
Among users $j,\cdots,K$, the system allows one whose channel condition is the best to transmit the cooperation signal. 
Let user $i_0$ be the strongest one, i.e., $i_0 = \underset{j \leq i \leq K}{\arg \max} |h_{i}|^2 = \underset{j \leq i \leq K}{\arg \max} |V_{i,j}|^2$.
Then, user $i$, where $j \leq i \leq K$ and $i \neq i_0$, receives the cooperation signals $c^j_{i}$ to help decoding of $s_j$ from user $i_0$, and user $i_0$ does not receive any cooperation signal.
\begin{equation}
c^j_{i} = g_{i,i_0} s_{j} + n_i,~~\forall i\in \{j, \cdots, K\},~i \neq i_0
\end{equation}
where $g_{i,i_0}$ is channel fading from user $i_0$ to user $i$.
Let $W_{i,i_0}$ be the SINR of the cooperation signal from user $i_0$ to user $i$, as written by
\begin{equation}
W_{i,i_0} = \frac{|g_{i,i_0}|^2}{\sigma_n^2}.
\end{equation}

The $j$-th cooperation step can increase the data rate of user $j$, and/or help other users $j+1,\cdots,K$ to perform SIC better. 
The SINR for decoding $s_j$ at user $i$ is denoted by $Z_{\text{cN},i,j}^{\text{bi}}$ and obtained by
\begin{equation}
Z_{\text{cN},i,j}^{\text{bi}} = 
\begin{cases}
V_{i,j} & i=i_0 \\
\max \{ V_{i,j}, W_{i,i_0} \} & i \neq i_0
\end{cases},~~\forall i \in \{j, \cdots, K\}.
\end{equation}
Therefore, the total SINR for decoding $s_j$ becomes as follows:
\begin{equation}
Z_{\text{cN},j}^{\text{bi}} = \min \{ Z_{\text{cN},j,j}^{\text{bi}}, \cdots, Z_{\text{cN},K,j}^{\text{bi}} \},
\end{equation}
With the assumption that all users are located nearby, the cooperation signal is much stronger than the signal from the BS, i.e., $W_{i,i_0} \gg V_{i,j}$, $Z_{\text{cN},j}^{\text{bi}}$ can be approximated by
\begin{equation}
Z_{\text{cN},j}^{\text{bi}} \approx \tilde{Z}_{\text{cN},j}^{\text{bi}} = \max \{ V_{j,j}, \cdots, V_{K,j} \}.
\end{equation}

On the other hand, in uni-directional cooperative NOMA, the direction of cooperation is already determined, so all users except for user $K$ receive the cooperation signals from user $K$, whose channel variance is the largest in the statistical-CSIT case, or who are arbitrarily determined in the no-CSIT case.
Therefore, the SINRs are given by
\begin{equation}
{Z}_{\text{cN},i,j}^{\text{uni}} = 
\begin{cases}
V_{i,j} & i=K \\
\max \{ V_{i,j}, W_{j,K} \} & i \neq i_0
\end{cases},~~\text{for}~ j\leq i\leq K,
\end{equation}
\begin{align}
{Z}_{\text{cN},j}^{\text{uni}} &= \min \{ {Z}_{\text{cN},j,j}^{\text{uni}}, \cdots, {Z}_{\text{cN},K,j}^{\text{uni}} \} \\
&\approx \tilde{Z}_{\text{cN},j}^{\text{uni}} = V_{K,j}.
\end{align}
Conventional NOMA does not allow any cooperation, so its SINR for decoding $s_j$ is given by
\begin{equation}
Z_{\text{N},j} = \min \{ V_{j,j}, V_{j+1,i}, \cdots, V_{K,i} \}.
\end{equation}

Mathematical analysis of ergodic capacity and outage probability for the multi-user model is omitted here, but we verify the advantages of bi-directional cooperative NOMA in the multi-user model by simulation. 
Section \ref{subsec:capacity_randomlyusers} shows that bi-directional cooperative NOMA gives still better data rates than uni-directional cooperative NOMA, conventional NOMA and OMA for randomly generated multiple users.

\textit{Remark:} Complexity for performing SIC is an important issue in the NOMA system. 
For conventional NOMA, $k-1$ times of SIC processes are required for the $k$-th strongest user and SIC processes of all users are performed independently.
The number of required SIC processes for bi-directional cooperative NOMA is the same as conventional one.
Also, the cooperation phases are performed on signal-by-signal basis, so the SIC step for decoding $s_j$ is performed at every user $i\in\{j,\cdots,K\}$ in parallel.
Therefore, bi-directional cooperative NOMA does not require the additional time slots for SIC processes of different users, as long as the $j$-th cooperation phase is successfully completed after SIC of $s_j$.
On the other hand, the $k$-th strongest user of cooperative NOMA in \cite{NOMA:uni-cN:Ding} should wait for the others with the better channel conditions to finish the SIC processes and to transmit the cooperation signals.

\section{Simulation Results}
\label{sec:simulation}

The simulation model is based on Fig. \ref{fig:bi-cN_model}, and we assume $R=50$.
Without loss of generality, users 1 and 2 are assume to be the non-SIC user and the SIC user, respectively. 
Short-hand notations `Bi-cN', `Uni-cN', and `NOMA' denote bi-directional cooperative NOMA, uni-directional cooperative NOMA, and conventional NOMA, respectively, in the figures.

\subsection{Ergodic Capacity under Statistical-CSIT only}
\label{subsec:ergodicCap_statistical}

First, consider the statistical-CSIT case. 
For bi-directional cooperative NOMA, Algorithms \ref{algo:user_fairness} and \ref{algo:max_sum_rate} are used to find the optimal power allocations for user fairness and sum-rate maximization, respectively. 
For other schemes, uni-directional cooperative NOMA, conventional NOMA and OMA, the optimal power ratios are numerically found.
The ergodic capacity plots versus $d_2$ are obtained with the fixed position of user 1, $d_1=40$.
Since user 2 is the SIC user, $L_1 \leq L_2$, i.e., $d_2 \leq d_1$, in the statistical-CSIT case.

\begin{figure}[t]
	\minipage{0.4\textwidth}
	\includegraphics[width=\linewidth]{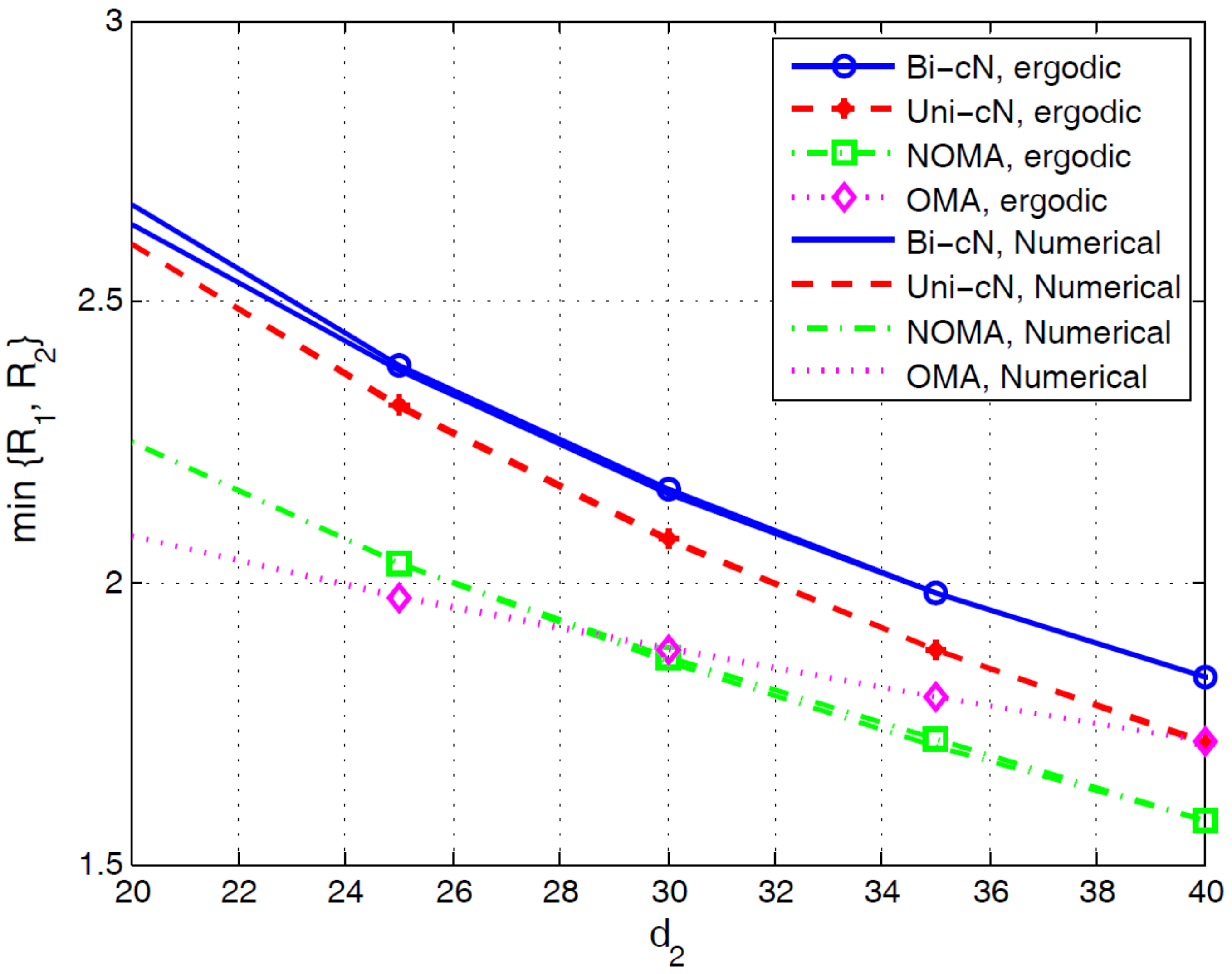}
	\caption{User fairness in statistical-CSIT case, SNR=10dB} \label{fig:Fairness_statisticalCSI_10dB_d1=40}
	\endminipage\hfill
	\minipage{0.4\textwidth}
	\includegraphics[width=\linewidth]{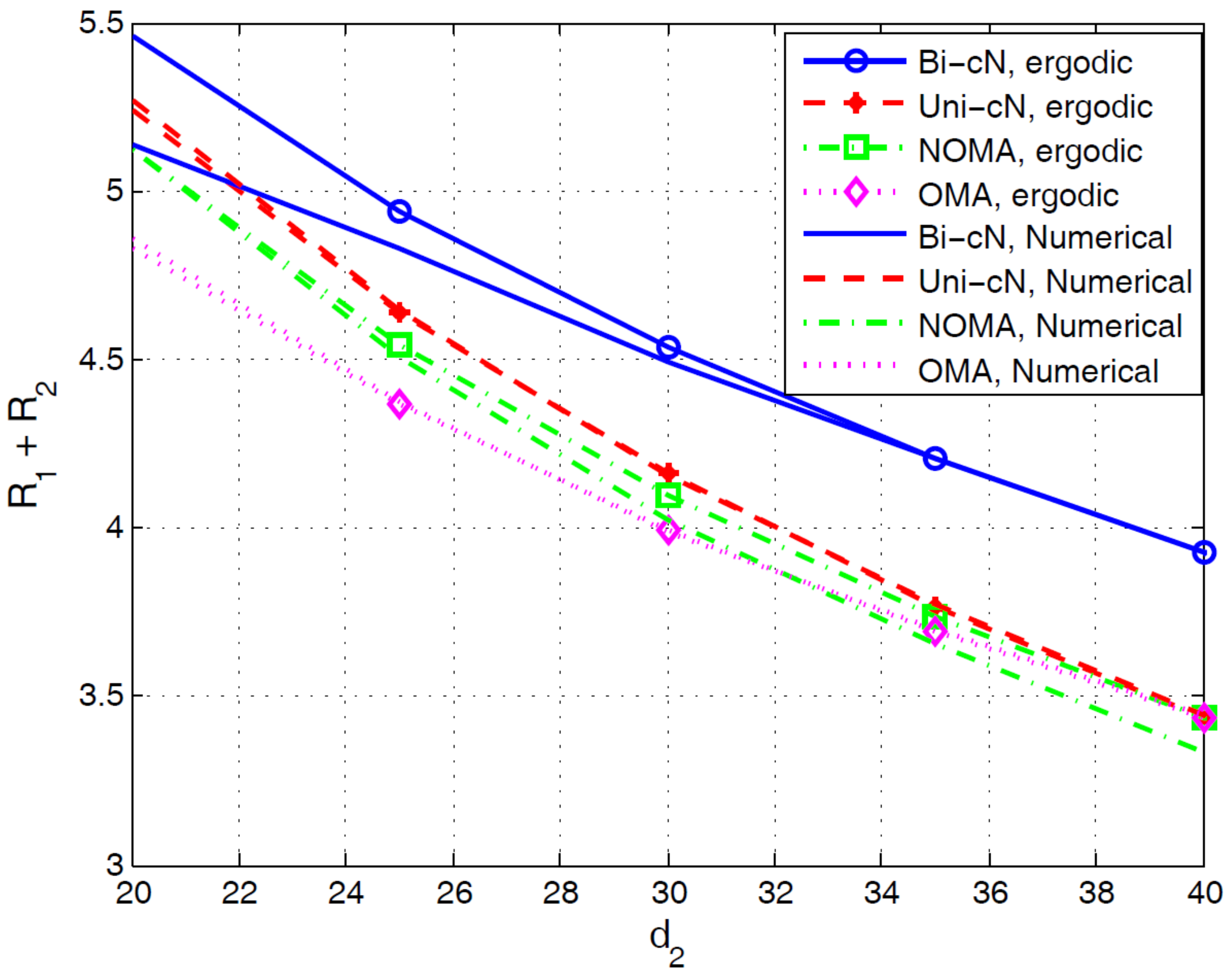}
	\caption{Sum-rate in statistical CSIT-case, SNR=10dB} \label{fig:SumRate_statisticalCSI_10dB_d1=40}
	\endminipage
\end{figure}

Figs. \ref{fig:Fairness_statisticalCSI_10dB_d1=40} and \ref{fig:SumRate_statisticalCSI_10dB_d1=40} show user-fairness and sum-rate performances in the two-user scenario with the transmit SNR of 10 dB, respectively. 
$R_{t,1} = R_{t,2} = 0.8$ is assumed for sum-rate results.
Each figure includes both ergodic capacity and numerically obtained data rates, and we can easily see that both are almost the same as $d_2$ approaches to $d_1=40$. 
This indicates the assumption that the users are located nearby is reliable, when $d_1 \approx d_2$.
When $d_2$ is small, the ergodic capacity and numerically obtained sum-rate of bi-directional cooperative NOMA are somewhat different, because Algorithm \ref{algo:max_sum_rate} is suboptimal, as mentioned earlier.
In Figs. \ref{fig:Fairness_statisticalCSI_10dB_d1=40} and \ref{fig:SumRate_statisticalCSI_10dB_d1=40}, bi-directional cooperative NOMA gives better performances of both user fairness and sum-rate than other schemes.
As $d_2$ approaches to $d_1$, the channel gain difference between the users decreases, so the capacity gain of NOMA compared to OMA also decreases.
Therefore, uni-directional cooperative NOMA has the same performances as OMA and conventional NOMA becomes even worse than OMA, when $d_1 \approx d_2$.  
Whereas, bi-directional cooperative NOMA is still better than OMA, so it can be said that bi-directional cooperative NOMA is useful even when the channel gain difference of users is not large.

\subsection{Ergodic Capacity under No-CSIT}
\label{subsec:ergodicCap_noCSIT}

In this subsection, we consider the situation where the BS does not know users' CSI at all. 
Again, the two-user scenario is considered with $d_1=40$.
Since no CSI is available at the BS, the BS arbitrarily determines users 1 and 2 as the non-SIC user and the SIC user, respectively, so $d_2>d_1$ is possible.
It is impossible to find the optimal power allocation, so the fixed power allocation is used.
$\gamma_1 = 0.8$ is assumed for NOMA schemes, and $\gamma_1 = 0.5$ and $\alpha_1 = 0.5$ are used for OMA, because fair power allocation is preferable for OMA without any CSI. 
Also, only sum-rate performances are investigated in the no-CSIT case, because the trends of user-fairness performances depend largely on the power allocation ratios.

\begin{figure} [h!]
	\centering
	\includegraphics[width=0.42\textwidth]{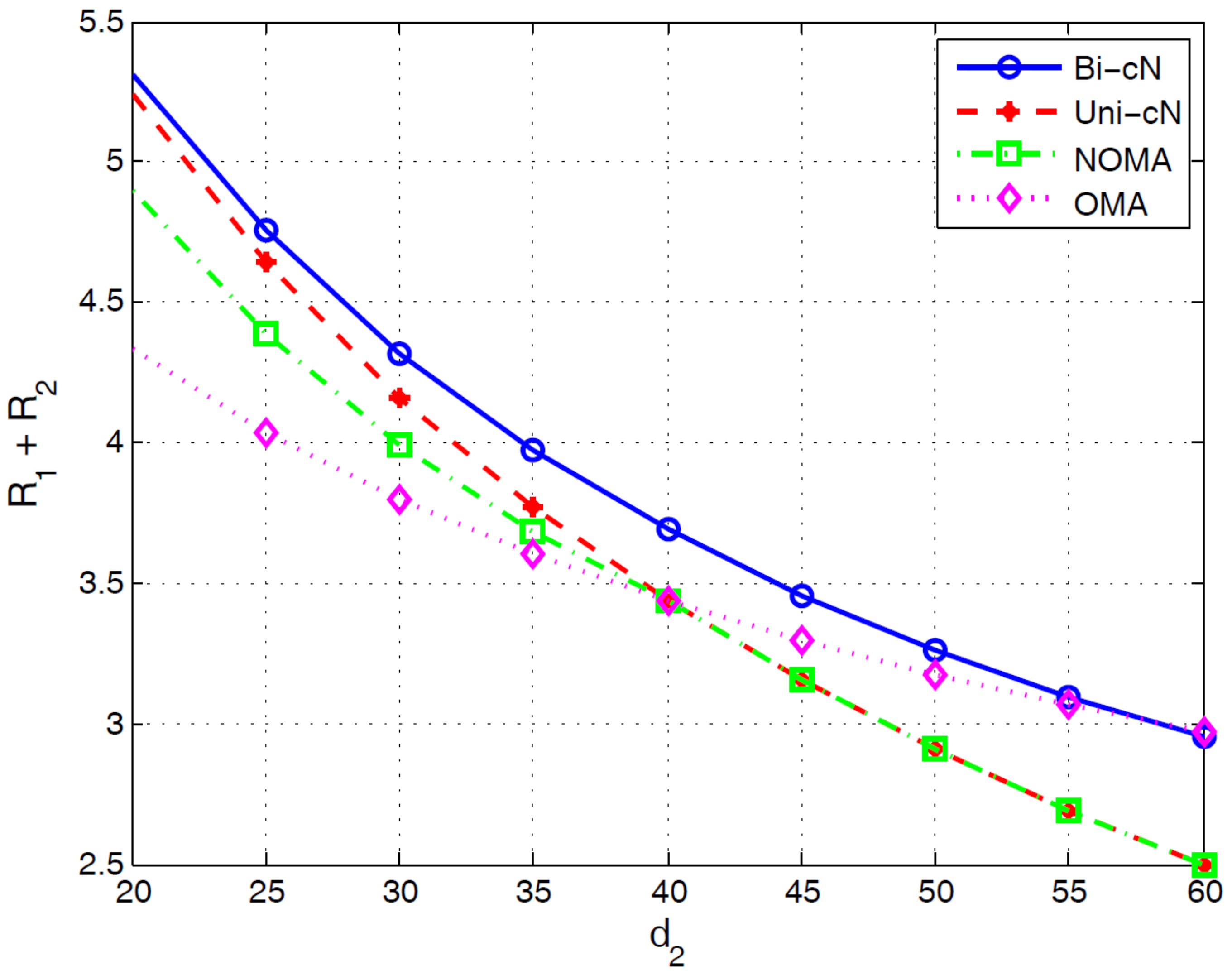}
	\caption{Sum-rate in no-CSIT case, SNR=10dB} \label{fig:SumRate_noCSI_10dB_d1=40}
\end{figure}

Fig. \ref{fig:SumRate_noCSI_10dB_d1=40} shows the sum-rate graphs with no CSIT and a transmit SNR of 10 dB.
In Fig. \ref{fig:SumRate_noCSI_10dB_d1=40}, it is easily noticeable that bi-directional cooperative NOMA gives better capacity than other NOMA schemes and OMA.
When $d_2<d_1$, capacity gains of bi-directional cooperative NOMA over other NOMA schemes are not large, but its gain over OMA is large due to SIC.
When $d_2 \geq d_1$, since the channel gain of the SIC user usually becomes smaller than that of the non-SIC user, the advantage of NOMA and SIC vanishes, so uni-directional cooperative NOMA and conventional NOMA become worse than OMA. 
However, bi-directional cooperative NOMA still has a capacity gain compared to OMA even when $d_2 \geq d_1$. 
These results are consistent with Theorems \ref{thm:ergodic_comparison_NOMA} and \ref{theorem:ergodic_inequality_oma}. 
Especially when $d_1=d_2=40$, bi-directional cooperative NOMA shows a capacity increase of 15\% compared to OMA.
As $d_2$ increases much, bi-directional cooperative NOMA would give a smaller sum-rate than OMA, but its gap is relatively small, compared to the region of $d_2 \leq d_1$.

\subsection{Outage Probability}

\begin{figure}[t]
	\minipage{0.4\textwidth}
	\includegraphics[width=\linewidth]{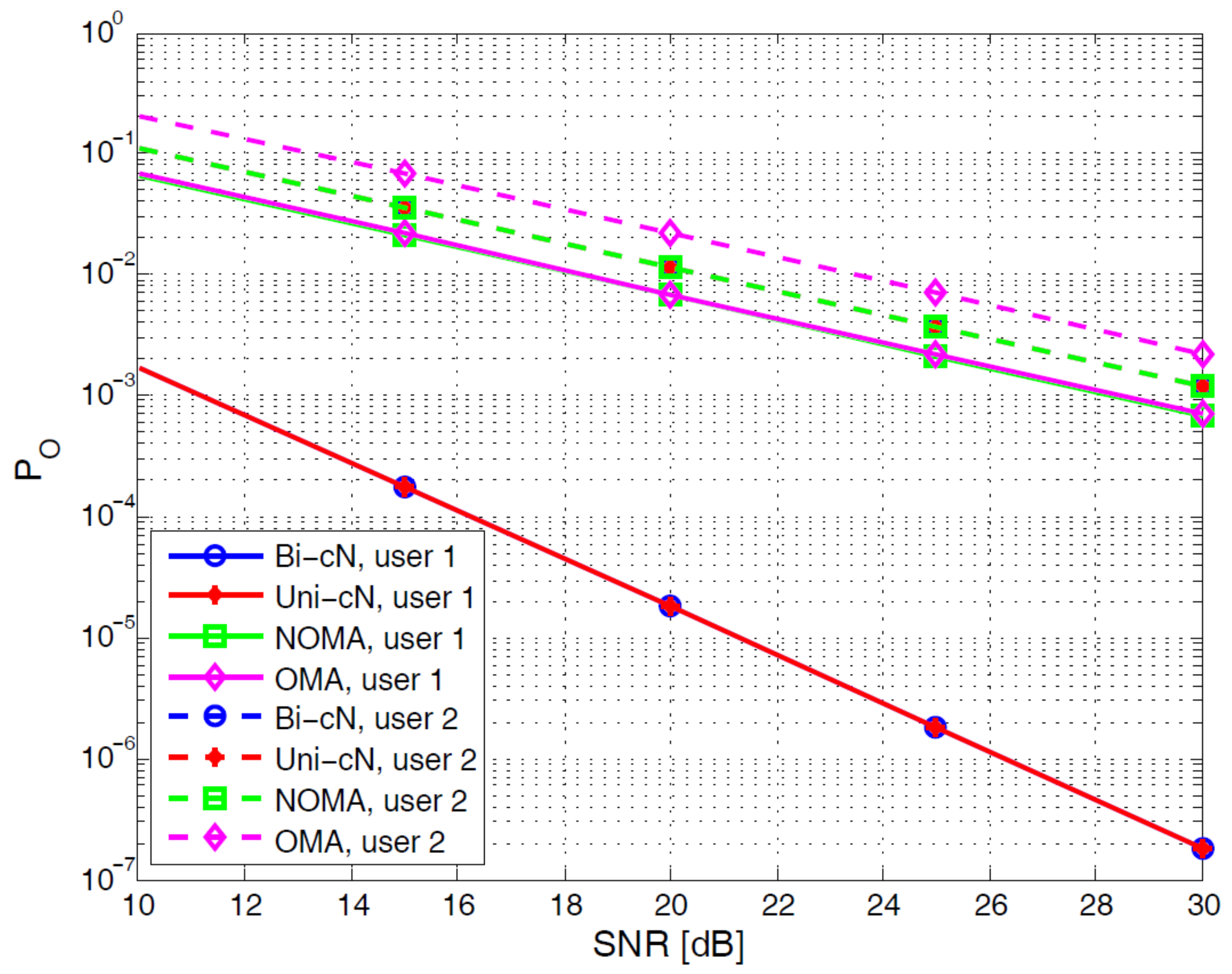}
	\caption{Outage probabilities when $d_1=40, d_2=20, R_{t1}=0.7, R_{t2}=1.5$}
	\label{fig:Outage_d1=40_d2=20_Rt1=07_Rt2=15_p1=075}
	\endminipage\hfill
	\minipage{0.4\textwidth}
	\includegraphics[width=\linewidth]{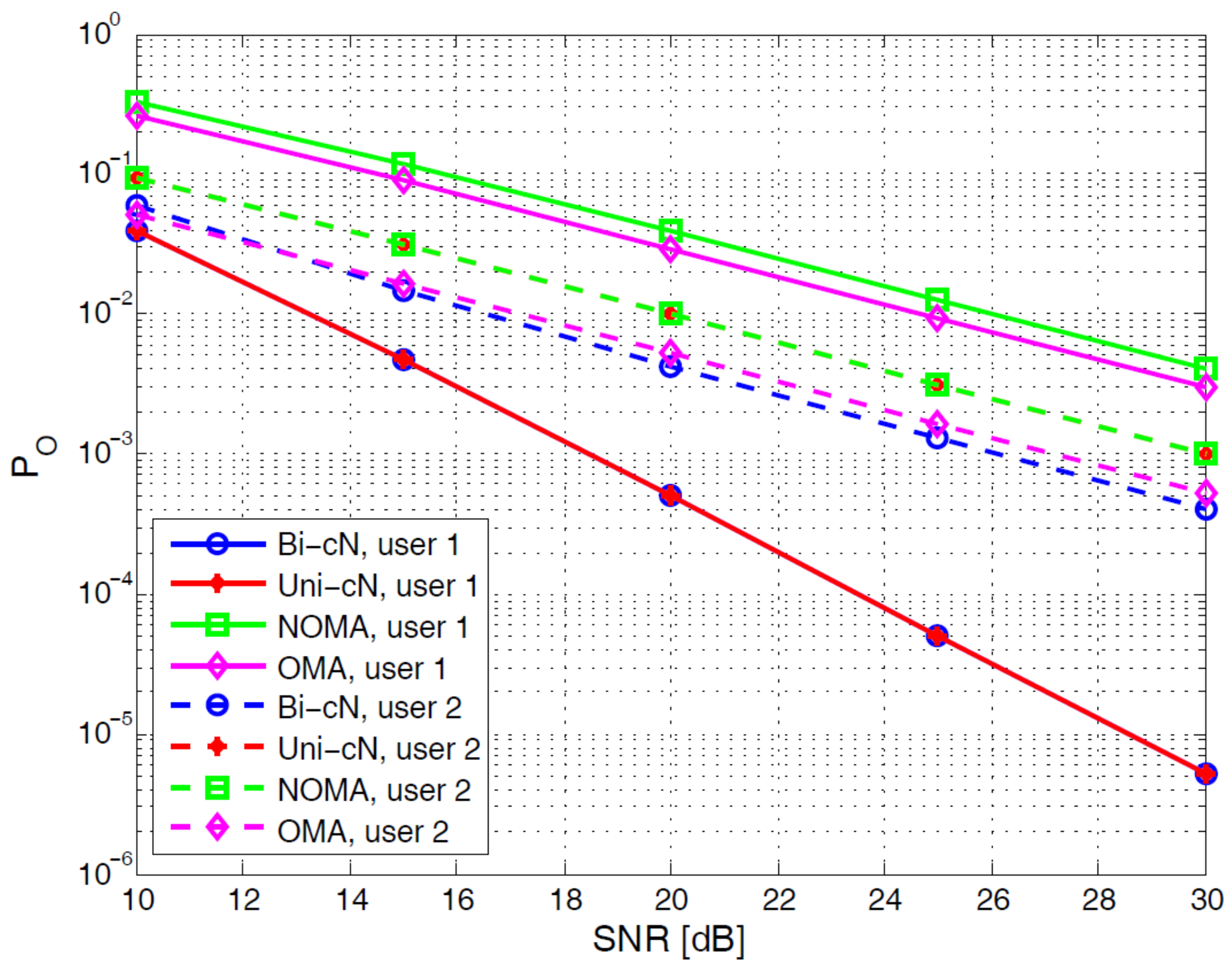}
	\caption{Outage probabilities when $d_1=40, d_2=20, R_{t1}=1.5, R_{t2}=0.7$}
	\label{fig:Outage_d1=40_d2=20_Rt1=15_Rt2=07_p1=075}
	\endminipage
\end{figure}

\begin{figure}[t]
	\minipage{0.4\textwidth}
	\includegraphics[width=\linewidth]{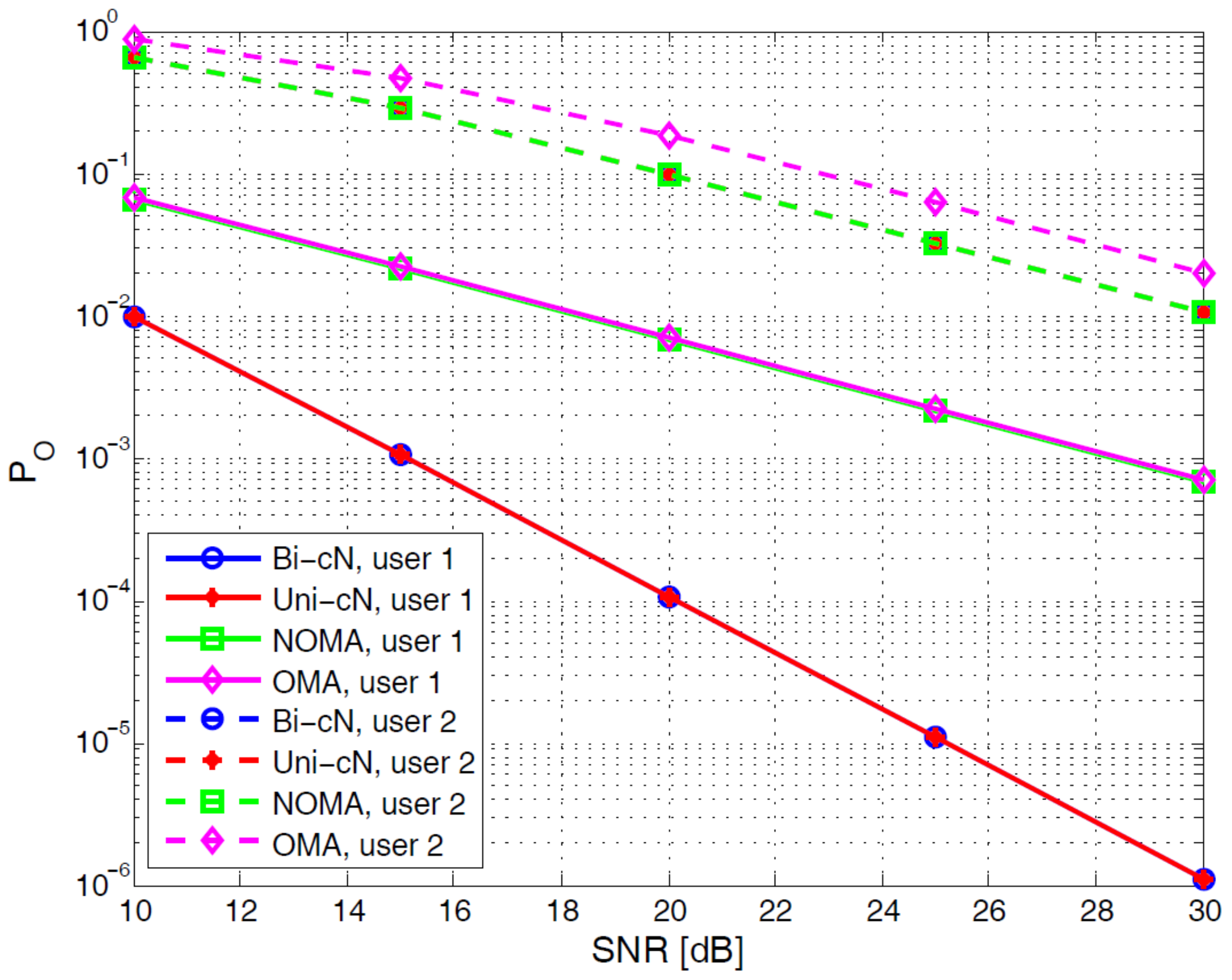}
	\caption{Outage probabilities when $d_1=40, d_2=60, R_{t1}=0.7, R_{t2}=1.5$}
	\label{fig:Outage_d1=40_d2=60_Rt1=07_Rt2=15_p1=075}
	\endminipage\hfill
	\minipage{0.4\textwidth}
	\includegraphics[width=\linewidth]{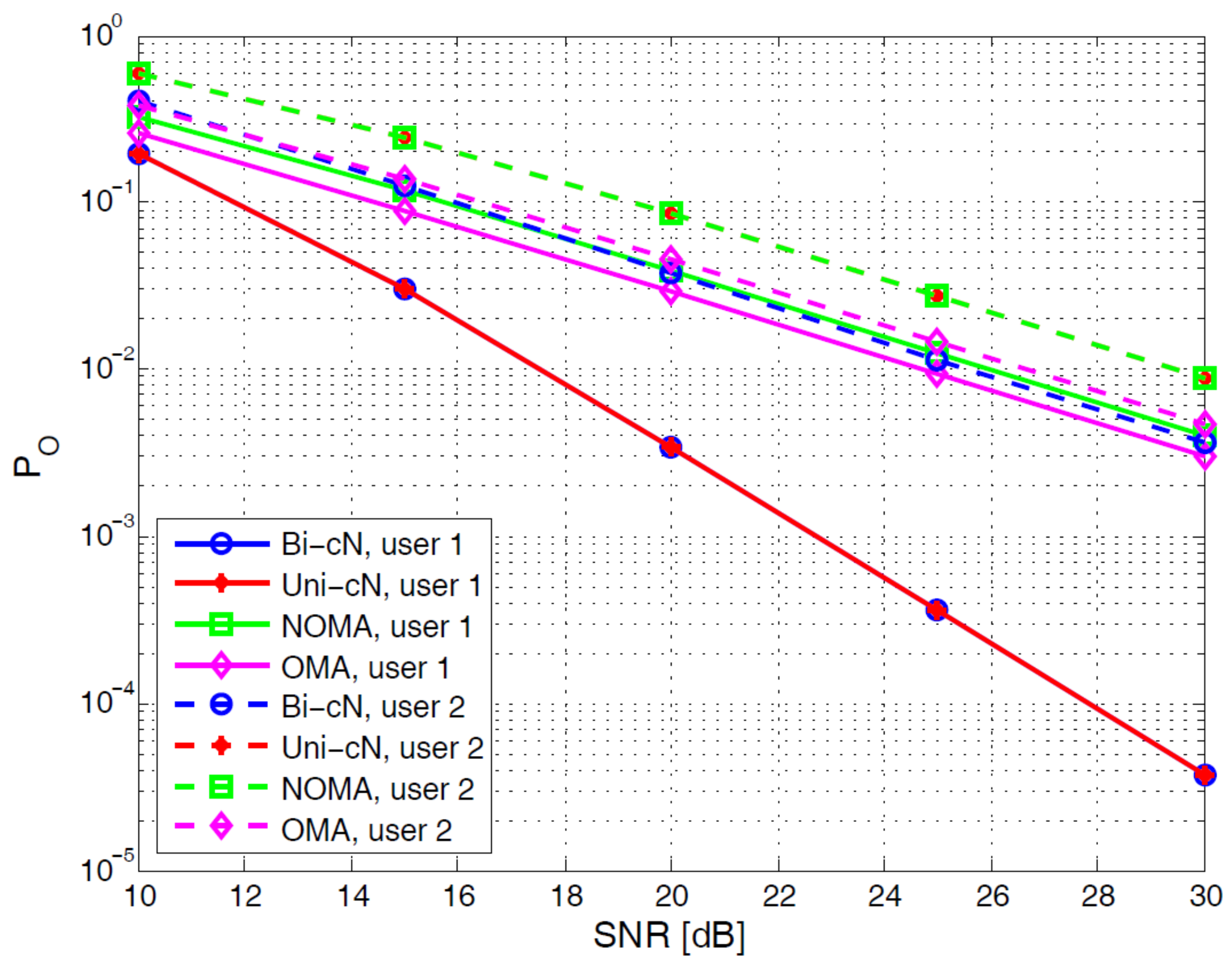}
	\caption{Outage probabilities when $d_1=40, d_2=60, R_{t1}=1.5, R_{t2}=0.7$}
	\label{fig:Outage_d1=40_d2=60_Rt1=15_Rt2=07_p1=075}
	\endminipage
\end{figure}

As shown in Section \ref{sec:outage_prob}, the outage probability is computed by channel variances and power allocation ratios.
Power allocation and $R_{t,1}$ are appropriately chosen to satisfy $\frac{\gamma_1}{\gamma_2} > \epsilon_1$, and $\gamma_1=0.75$ is assumed here.
Figs. \ref{fig:Outage_d1=40_d2=20_Rt1=07_Rt2=15_p1=075} and \ref{fig:Outage_d1=40_d2=20_Rt1=15_Rt2=07_p1=075} give outage probability performances when $d_1=40$ and $d_2=20$.
Fig. \ref{fig:Outage_d1=40_d2=20_Rt1=07_Rt2=15_p1=075} assumes $R_{t,1}=0.7$ and $R_{t,2}=1.5$ and Fig. \ref{fig:Outage_d1=40_d2=20_Rt1=15_Rt2=07_p1=075} is obtained with $R_{t,1}=1.5$ and $R_{t,2}=0.7$.
In other words, Fig. \ref{fig:Outage_d1=40_d2=20_Rt1=07_Rt2=15_p1=075} satisfies the condition of $\frac{\epsilon_2}{\gamma_2} > \frac{\epsilon_1}{\gamma_1 - \epsilon_1\gamma_2}$, so bi- and uni-directional cooperative NOMA schemes show exactly the same outage probabilities for both users. 
Also, it is easily noted by the slopes of graphs that bi- and uni-directional cooperative NOMA schemes provide a better diversity order than conventional NOMA and OMA for user 1, but not for user 2.
Difference in diversity order is also observed in Fig. \ref{fig:Outage_d1=40_d2=20_Rt1=15_Rt2=07_p1=075}, and two cooperative NOMA schemes still have the same outage probability of user 1.
However, since $\frac{\epsilon_2}{\gamma_2} < \frac{\epsilon_1}{\gamma_1 - \epsilon_1\gamma_2}$ is satisfied in Fig. \ref{fig:Outage_d1=40_d2=20_Rt1=15_Rt2=07_p1=075}, the outage probability of user 2 of bi-directional cooperative NOMA has a power gain compared to uni-directional cooperative NOMA.

Figs. \ref{fig:Outage_d1=40_d2=60_Rt1=07_Rt2=15_p1=075} and \ref{fig:Outage_d1=40_d2=60_Rt1=15_Rt2=07_p1=075} give outage probability performances obtained with $d_1=40$ and $d_2=60$.
Since $L_1>L_2$ in Figs. \ref{fig:Outage_d1=40_d2=60_Rt1=07_Rt2=15_p1=075} and \ref{fig:Outage_d1=40_d2=60_Rt1=15_Rt2=07_p1=075}, those results only correspond to the no-CSIT case.
The plots in Figs. \ref{fig:Outage_d1=40_d2=60_Rt1=07_Rt2=15_p1=075} and \ref{fig:Outage_d1=40_d2=60_Rt1=15_Rt2=07_p1=075} show almost same trends with those in Figs. \ref{fig:Outage_d1=40_d2=20_Rt1=07_Rt2=15_p1=075} and \ref{fig:Outage_d1=40_d2=20_Rt1=15_Rt2=07_p1=075}, except for a little bit of power gain differences.
The power gain of bi-directional cooperative NOMA over uni-directional cooperative NOMA and the diversity order gains over conventional NOMA and OMA are still guaranteed even when $d_1=40$ and $d_2=60$.

\begin{figure} [h!]
	\centering
	\includegraphics[width=0.48\textwidth]{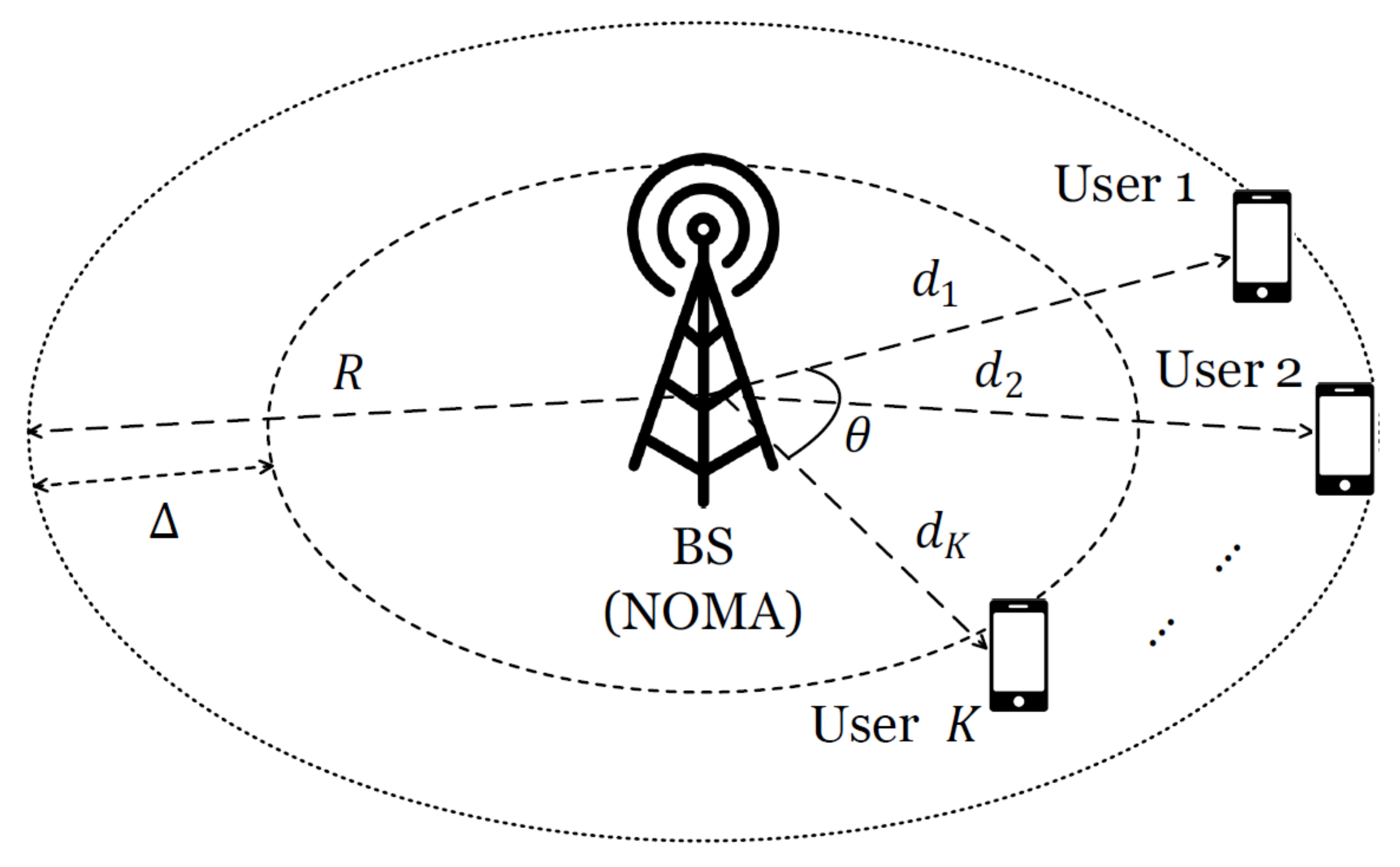}
	\caption{Cellular model of randomly generated multiple users}
	\label{fig:RandomUserModel}
\end{figure}

\subsection{Capacity of Randomly Generated Users}
\label{subsec:capacity_randomlyusers}

This section considers the cellular model of randomly positioned multiple users, as shown in Fig. \ref{fig:RandomUserModel}. 
In Fig. \ref{fig:RandomUserModel}, $K$ users are uniformly placed in the outer ring of the cell of radius $R=50$, i.e., the region between inner and outer circles whose radii are $R-\Delta$ and $R$, respectively, so $d_k \in [R-\Delta,R],~\forall k=\{1,\cdots,K\}$.
When $\Delta$ is very small, the only cell-edge users are chosen, and as $\Delta$ increases, almost all of the cell region is covered.
Also, $K$ users are separated by an angle smaller than $\theta$ to guarantee that all users are close to one another. 
It makes the exchange of CSI among users easier and the cooperation more helpful.
We only presents an optimal power allocation rule for the two-user model, so optimal power allocations for $K$ users are numerically obtained in the statistical-CSIT case.
On the other hand, in the case of no CSIT, fixed power allocation is applied.
Assume that $\gamma_{k} = 2\gamma_{k+1}$ for NOMA schemes, and $\gamma_1 = \cdots = \gamma_K = 0.5$ and $\alpha_1 = \cdots = \alpha_K = 0.5$ are used for OMA.
Similar to Section \ref{subsec:ergodicCap_noCSIT}, the sum-rate performance is only considered in the no-CSIT case.

\begin{figure}[t]
	\minipage{0.4\textwidth}
	\includegraphics[width=\linewidth]{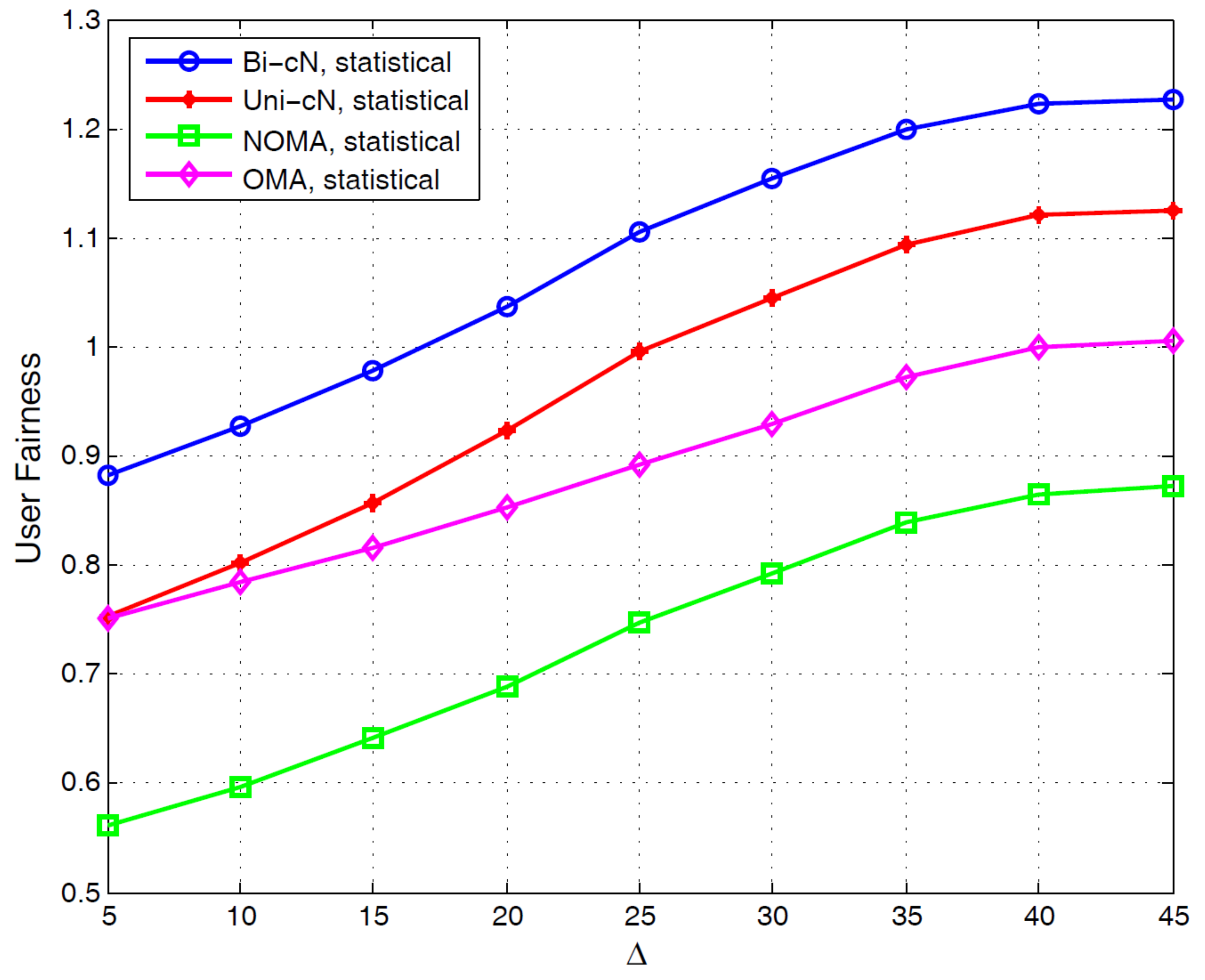}
	\caption{Capacity for user fairness in cellular model of randomly generated users}
	\label{fig:CapacityRandomEdgeUser_UserFairness}
	\endminipage\hfill
	\minipage{0.4\textwidth}
	\includegraphics[width=\linewidth]{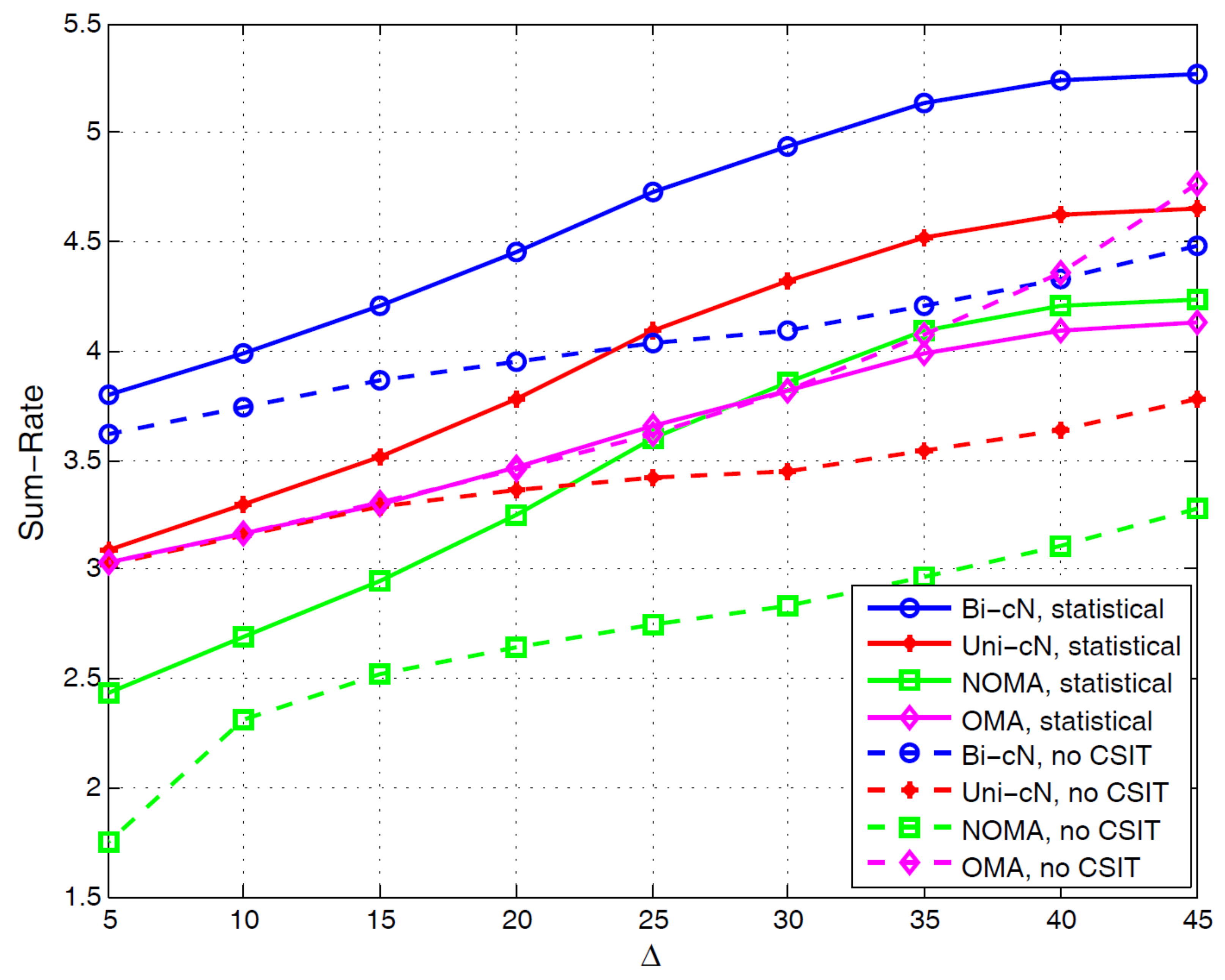}
	\caption{Capacity for sum-rate in cellular model of randomly generated users}
	\label{fig:CapacityRandomEdgeUser_SumRate}
	\endminipage
\end{figure}

Figs. \ref{fig:CapacityRandomEdgeUser_UserFairness} and \ref{fig:CapacityRandomEdgeUser_SumRate} show plots of data rates versus $\Delta$ obtained from user fairness and sum-rate problems, respectively, with $K=4$ randomly located users. 
Solid and dashed graphs correspond to the cases of statistical CSIT and no CSIT, respectively.
All graphs show increasing trends over $\Delta$, because users are likely to have stronger channel gains with a larger $\Delta$.
We can easily see that bi-directional cooperative NOMA gives the best data rates among comparison schemes both in the statistical-CSIT and the no-CSIT cases.
Especially in the statistical-CSIT case, when $\Delta$ is small, channel gain differences among users are not large enough to show the advantage of NOMA schemes compared to OMA, so the performance of uni-directional cooperative NOMA is similar to that of OMA, and conventional NOMA is worse than OMA.
On the other hand, bi-directional cooperative NOMA is still better than OMA even with small $\Delta$, in terms of both user fairness and sum-rate. 

The sum-rate performances in the no-CSIT case are much worse than those in the statistical-CSIT case as $\Delta$ increases. 
The reason is that when $\Delta$ is large, the situation in which users with weaker channel conditions perform SIC for decoding the signal of users with stronger channel gains so the advantage of NOMA vanishes happens frequently.
In this situation, smaller power is allocated to weaker user than stronger one, so uni-directional cooperative NOMA and conventional NOMA give worse sum-rates than OMA in the no-CSIT case.
However, the sum-rates of bi-directional cooperative NOMA are still better than other NOMA schemes as well as OMA in the most values of $\Delta$, even when the BS does not have accurate knowledge of users' CSI and thus arbitrarily schedules the users for signal transmission.
This means that bi-directional cooperative NOMA is useful when there is little need for channel tracking or elaborate user scheduling.

\section{Conclusion}
\label{sec:conclusion}

This paper proposes bi-directional cooperative NOMA, in which NOMA users cooperate with each other by channel information exchange, with statistical CSIT only or no CSIT.
Performance analysis has been conducted in terms of ergodic capacity and outage probability. 
The closed-form ergodic capacity of bi-directional cooperative NOMA in the two-user model is derived, and it is shown to be better than those of uni-directional cooperative NOMA, conventional NOMA, and OMA even when the scheduled users have similar channel gains, under only statistical CSI or no CSI at the BS.
Based on the ergodic capacity, algorithms to find the optimal power allocations are presented for user fairness and max-sum-rate problems.
In addition, we show that the outage probability of the SIC user of bi-directional cooperative NOMA has a power gain over that of uni-directional cooperative NOMA, and bi- and uni-directional cooperative NOMA schemes have a diversity gain over conventional NOMA and OMA.
Also, the multi-user model of bi-directional cooperative NOMA is presented by using the cooperations among users on signal-by-signal basis.
Simulation results verify the above performance analyses, and that bi-directional cooperative NOMA works well with multiple users in statistical- and no-CSIT cases.
The proposed system is beneficial in some important practical scenarios: a highly crowded stadium in which many users experience similar channel gains, and an IoT environment in which an inexpensive transmitter should serve a massive number of machine-type devices but cannot handle substantial processing tasks, so not enough CSI is available at the BS.




\appendices
\section{Proof of Theorem \ref{thm:closed_ergodic_bi-cN}}
\label{appendix:thm1}

$|h_1|^2$ and $|h_2|^2$ are chi-square distributed with variances of $L_1/2$ and $L_2/2$, respectively, so $|h_i|^2 = L_i X_i / 2$, for $i=\{1,2\}$, where $X_i$ is a chi-square random variable of unit variance.
According to (\ref{eq:bi-cN:R1-near-user}), 
\begin{eqnarray}
\mathbb{E}[\tilde{R}_{cN,1}^{bi}] &=& \iint p(x_1,x_2) \tilde{R}_{cN,1}^{bi} \mathrm{d}x_1 \mathrm{d}x_2 \\
&=& \iint_{|h_1|^2 > |h_2|^2} p(x_1,x_2) Z_1 \mathrm{d}x_1 \mathrm{d}x_2 + \iint_{|h_1|^2 < |h_2|^2 } p(x_1,x_2) Z_{2,SIC} \mathrm{d}x_1 \mathrm{d}x_2.
\label{eq:exp_cN1}
\end{eqnarray}

Since $X_1$ and $X_2$ are independent, $p(x_1, x_2) = p(x_1)p(x_2)$, the first term of (\ref{eq:exp_cN1}) becomes
\begin{eqnarray}
&&\int_{0}^{\infty} \int_{0}^{\frac{L_1}{L_2}x_1} \frac{1}{4} e^{-\frac{x_1}{2}} e^{-\frac{x_2}{2}} \bigg\{ \log_2\Big(1+ \frac{\gamma_1 L_1 x_1}{\gamma_2 L_1 x_1 + 2\sigma_n^2}\Big) \bigg\} \mathrm{d}x_1 \mathrm{d}x_2 \\
&=& \int_{0}^{\infty} \frac{1}{2} (e^{-\frac{x_1}{2}} - e^{-\frac{L_1+L_2}{2L_2}x_1}) \bigg\{ \log_2\Big(1+ \frac{L_1 x_1}{2\sigma_n^2}\Big) - \log_2\Big(1+ \frac{\gamma_2 L_1 x_1}{2\sigma_n^2}\Big) \bigg\} \mathrm{d}x_1 \\
&=& C_1\Big(\frac{L_1}{\sigma_n^2}\Big) - C_1\Big(\frac{\gamma_2 L_1}{\sigma_n^2}\Big) - \frac{L_2}{L_1 + L_2} \bigg\{ C_1\Big(\frac{L_1 L_2}{(L_1+L_2)\sigma_n^2}\Big) - C_1\Big(\frac{\gamma_2 L_1 L_2}{(L_1+L_2)\sigma_n^2}\Big) \bigg\}.
\label{eq:temp1}
\end{eqnarray}
The last equation (\ref{eq:temp1}) holds by Lemma \ref{lemma:integral_chi}.

Likewise, the second term of (\ref{eq:exp_cN1}) is
\begin{eqnarray}
&&\int_{0}^{\infty} \int_{0}^{\frac{L_2}{L_1}x_2} \frac{1}{4} e^{-\frac{x_1}{2}} e^{-\frac{x_2}{2}} \log_2 \Big( 1 + \frac{\gamma_1 L_2 x_2}{\gamma_2 L_2 x_2 + 2\sigma_n^2} \Big) \mathrm{d}x_1 \mathrm{d}x_2 \\
&=& C_1 \Big(\frac{L_2}{\sigma_n^2}\Big) - C_1\Big(\frac{\gamma_2 L_2 }{\sigma_n^2}\Big) - \frac{L_1}{L_1 + L_2} \bigg\{ C_1 \Big(\frac{L_1 L_2}{(L_1+L_2)\sigma_n^2}\Big) - C_1\Big(\frac{\gamma_2 L_1 L_2}{(L_1+L_2)\sigma_n^2}\Big) \bigg\}
\end{eqnarray}

Therefore, 
\begin{equation}
\mathbb{E}[\tilde{R}_{cN,1}^{bi}] = C_1\Big(\frac{L_1}{\sigma_n^2}\Big) - C_1\Big(\frac{\gamma_2 L_1}{\sigma_n^2}\Big) + C_1\Big(\frac{L_2}{\sigma_n^2}\Big) - C_1\Big(\frac{\gamma_2 L_2}{\sigma_n^2}\Big) - C_1\Big(\frac{L_1 L_2}{(L_1+L_2)\sigma_n^2}\Big) + C_1\Big(\frac{\gamma_2 L_1 L_2}{(L_1+L_2)\sigma_n^2}\Big)
\end{equation}

Next, 
\begin{eqnarray}
\mathbb{E}[R_{cN,2}^{bi}] &=& \int_{0}^{\infty} \int_{0}^{\infty} \frac{1}{4} e^{-\frac{x_1}{2}} e^{-\frac{x_2}{2}} \log_2 \Big(1+\frac{\gamma_2 L_2 x_2}{2\sigma_n^2}\Big) \mathrm{d}x_1 \mathrm{d}x_2 \\
&=& \int_{0}^{\infty} \frac{1}{2} e^{-\frac{x_2}{2}} \log_2 \Big(1+\frac{\gamma_2 L_2 x_2}{2\sigma_n^2}\Big) \mathrm{d}x_2 = C_1 \Big(\frac{\gamma_2 L_2}{\sigma_n^2}\Big).
\label{eq:temp2}
\end{eqnarray}
Equation (\ref{eq:temp2}) holds by Lemma \ref{lemma:integral_chi}.
Thus, closed-form ergodic capacity (\ref{eq:closed_ergodic_cN_bi}) can be obtained.

\section{Proof of Theorem \ref{thm:ergodic_comparison_NOMA}}
\label{appendix:thm2}

\begin{eqnarray}
\mathbb{E}[\tilde{R}^{bi}_{cN,1}] &=& \mathbb{E}\bigg[\max \Big\{\log_2 \Big(1 + \frac{|h_1|^2 \gamma_1}{|h_1|^2 \gamma_2 +\sigma_n^2}\Big),~\log_2\Big(1 + \frac{|h_2|^2 \gamma_1}{|h_2|^2 \gamma_2 +\sigma_n^2}\Big)\Big\}\bigg]\\
&\geq& \max \bigg\{ \mathbb{E}\Big[\log_2 \Big(1 + \frac{|h_1|^2 \gamma_1}{|h_1|^2 \gamma_2 +\sigma_n^2}\Big)\Big],~\mathbb{E}\Big[\log_2\Big(1 + \frac{|h_2|^2 \gamma_1}{|h_2|^2 \gamma_2 +\sigma_n^2}\Big)\Big] \bigg\} \\
&=& \max \bigg\{ C_1 \Big(\frac{L_1}{\sigma_n^2}\Big) - C_1 \Big(\frac{\gamma_2L_1}{\sigma_n^2}\Big),~C_1 \Big(\frac{L_2}{\sigma_n^2}\Big) - C_1 \Big(\frac{\gamma_2L_2}{\sigma_n^2}\Big) \bigg\}
\label{eq:R1_bi-cN_closed_lowerbound}
\end{eqnarray}
The last equation (\ref{eq:R1_bi-cN_closed_lowerbound}) holds by Lemma \ref{lemma:expected_log_chi}. 

For uni-directional cooperative NOMA system, according to \ref{lemma:expected_log_chi},
\begin{eqnarray}
\mathbb{E}[\tilde{R}_{cN,1}^{uni}] &=& \mathbb{E} \bigg[\log_2\Big(1 + \frac{|h_2|^2 \gamma_1}{|h_2|^2 \gamma_2 +\sigma_n^2}\Big)\bigg]= C_1 \Big(\frac{L_2}{\sigma_n^2}\Big) - C_1 \Big(\frac{\gamma_2L_2}{\sigma_n^2}\Big)
\label{eq:R1_uni-cN_closed}
\end{eqnarray}

Likewise, the upper bound on ergodic capacity of conventional NOMA system becomes
\begin{eqnarray}
\mathbb{E}[R_{N,1}] &=& \mathbb{E}\bigg[\min \Big\{\log_2 \Big(1 + \frac{|h_1|^2 \gamma_1}{|h_1|^2 \gamma_2 +\sigma_n^2}\Big),~\log_2\Big(1 + \frac{|h_2|^2 \gamma_1}{|h_2|^2 \gamma_2 +\sigma_n^2}\Big)\Big\}\bigg]\\
&\leq& \min \bigg\{ \mathbb{E}\Big[\log_2 \Big(1 + \frac{|h_1|^2 \gamma_1}{|h_1|^2 \gamma_2 +\sigma_n^2}\Big)\Big],~\mathbb{E}\Big[\log_2\Big(1 + \frac{|h_2|^2 \gamma_1}{|h_2|^2 \gamma_2 +\sigma_n^2}\Big)\Big] \bigg\} \\
&=& \min \bigg\{ C_1 \Big(\frac{L_1}{\sigma_n^2}\Big) - C_1 \Big(\frac{\gamma_2L_1}{\sigma_n^2}\Big),~C_1 \Big(\frac{L_2}{\sigma_n^2}\Big) - C_1 \Big(\frac{\gamma_2L_2}{\sigma_n^2}\Big) \bigg\}
\label{eq:R1_noma_closed_upperbound}
\end{eqnarray}

Note that the data rates of $s_2$ of three schemes are all the same, 
\begin{equation}
\mathbb{E}[R_{cN,2}^{bi}] = \mathbb{E}[R_{cN,2}^{uni}] = \mathbb{E}[R_{N,2}] = \mathbb{E}\bigg[\log_2\Big(1+\frac{|h_2|^2 \gamma_2 }{\sigma_n^2}\Big)\bigg] = C_1 \Big(\frac{\gamma_2 L_2}{\sigma_n^2}\Big)
\label{eq:R2_closed}
\end{equation}
Since $\max\{x,y\} \geq x \geq \min \{x,y\},~\forall x,y \in \mathbb{R}$, $\mathbb{E}[\tilde{R}^{bi}_{cN}] \geq \mathbb{E}[\tilde{R}^{uni}_{cN}] \geq \mathbb{E}[R_{N}]$ is satisfied according to (\ref{eq:R1_bi-cN_closed_lowerbound}), (\ref{eq:R1_uni-cN_closed}), (\ref{eq:R1_noma_closed_upperbound}), and (\ref{eq:R2_closed}).

\section{Proof of Theorem \ref{theorem:ergodic_inequality_oma}}
\label{appendix:thm3}

First, concavity of $C_1(x)$ for $x>1$ is proved.
Differentiating $C_1(x)$ twice, 
%
\begin{equation}
\frac{\mathrm{d}^2}{\mathrm{d}x^2} C_1(x) = \Big( \frac{2}{x^3}+\frac{1}{x^4} \Big) C_1(x) - \frac{1}{\ln 2} \Big( \frac{1}{x^3} + \frac{1}{x^2} \Big)
\end{equation}

Let $f(x) = x^4 \frac{\mathrm{d}^2}{\mathrm{d}x^2} C_1(x) = (2x+1)C_1(x) - \frac{1}{\ln 2}(x^2+x)$.
Then, 
\begin{eqnarray}
x^2 \frac{\mathrm{d}}{\mathrm{d}x}f(x) &=& (2x^2-x-1)\Big( C_1(x) - \frac{x}{\ln2} \Big) - x C_1(x) \\
&<& \frac{1}{\ln 2} (2x^2-x-1)(\ln(1+x)-x) - xC_1(x)
\label{eq:temp4}
\end{eqnarray}
Since $2x^2-x-1 > 0$ when $x>1$ and $x > \ln (1+x)$, equation (\ref{eq:temp4}) is smaller than 0 when $x>1$.
$f(1)<0$ and $\frac{\mathrm{d}}{\mathrm{d}x}f(x) < 0$, so $f(x)$ is a strictly decreasing function of $x$ for $x>1$.
Therefore, $C_1(x)$ is a strictly concave function of $x$ for $x>1$.

Then,
\begin{eqnarray}
\mathbb{E}[\tilde{R}_{cN}^{bi}] - \mathbb{E}[R_O] &=& C_1 \Big(\frac{L}{\sigma_n^2}\Big) - C_1 \Big(\frac{\gamma_2 L}{\sigma_n^2}\Big) - \bigg\{ C_1 \Big(\frac{L}{2\sigma_n^2}\Big) - C_1 \Big(\frac{\gamma_2 L}{2\sigma_n^2}\Big) \bigg\} \nonumber \\
&&~~+C_1 \Big(\frac{L}{\sigma_n^2}\Big) - \alpha_1 C_1 \Big(\frac{\gamma_1 L}{\alpha_1 \sigma_n^2}\Big) - \alpha_2 C_1 \Big(\frac{\gamma_2 L}{\alpha_2 \sigma_n^2}\Big)
\end{eqnarray}
By Lemma \ref{lemma:Cx-Cax}, 
\begin{equation}
C_1 \Big(\frac{L}{\sigma_n^2}\Big) - C_1 \Big(\frac{\gamma_2 L}{\sigma_n^2}\Big) \geq C_1 \Big(\frac{L}{2\sigma_n^2}\Big) - C_1 \Big(\frac{\gamma_2 L}{2\sigma_n^2}\Big),
\end{equation}
and since $C_1(x)$ is strictly concave for $x>1$,
\begin{equation}
C_1 \Big(\frac{L}{\sigma_n^2}\Big) > \alpha_1 C_1 \Big(\frac{\gamma_1 L}{\alpha_1 \sigma_n^2}\Big) + \alpha_2 C_1 \Big(\frac{\gamma_2 L}{\alpha_2 \sigma_n^2}\Big),
\end{equation}
according to Jensen's inequality, if $\frac{L}{\sigma_n^2}, \frac{\gamma_1 L}{\alpha_1 \sigma_n^2},\frac{\gamma_2 L}{\alpha_2 \sigma_n^2}>1$.
Theorem \ref{theorem:ergodic_inequality_oma} is proved.




\begin{thebibliography}{1}
	
	\bibitem{NOMA_basic:Docomo-Saito}
	Y. Saito, Y. Kishiyama, A. Benjebbour, T. Nakamura, A. Li and K. Higuchi, ``Non-Orthogonal Multiple Access (NOMA) for Cellular Future Radio Access," \textit{IEEE Vehicular Technology Conference (VTC Spring)}, Dresden, 2013, pp. 1-5.
	
	\bibitem{5G:Andrews}
	J. G. Andrews et al., "What Will 5G Be?," \textit{IEEE Journal on Selected Areas in Communications}, vol. 32, no. 6, pp. 1065-1082, June 2014.
	
	\bibitem{5G:Wunder}
	G. Wunder, et al. ``5GNOW: non-orthogonal, asynchronous waveforms for future mobile applications." {\it{IEEE Communications Magazine}} 52.2 (2014): 97-105.
	
	\bibitem{NOMA_5G:Dai}
	L. Dai, B. Wang, Y. Yuan, S. Han, C. l. I and Z. Wang, ``Non-orthogonal multiple access for 5G: solutions, challenges, opportunities, and future research trends," \textit{IEEE Communications Magazine}, vol. 53, no. 9, pp. 74-81, September 2015.
	
	\bibitem{NOMA_5G:Islam}
	S. M. R. Islam, N. Avazov, O. A. Dobre and K. s. Kwak, "Power-Domain Non-Orthogonal Multiple Access (NOMA) in 5G Systems: Potentials and Challenges," \textit{IEEE Communications Surveys \& Tutorials}, vol. 19, no. 2, pp. 721-742,  2017.
	
	\bibitem{Book:Tse}
	David Tse and Pramod Viswanath. {\it{Fundamentals of wireless communication.}} Cambridge university press, 2005.
	
	\bibitem{NOMA:PerformanceAnalysis:Saito}
	Y. Saito, A. Benjebbour, Y. Kishiyama and T. Nakamura, ``System-level performance evaluation of downlink non-orthogonal multiple access (NOMA)," \textit{2013 IEEE 24th Annual International Symposium on Personal, Indoor, and Mobile Radio Communications (PIMRC)}, London, 2013, pp. 611-615.
	
	\bibitem{NOMA:PerformanceAnalysis:Ding}
	Z. Ding, Z. Yang, P. Fan, and H. V. Poor, ``On the performance of nonorthogonal multiple access in 5G systems with randomly deployed users," {\it IEEE Signal Processing Letters,} vol. 21, no. 12, pp. 1501-1505, Dec. 2014.
	
	\bibitem{NOMA:UserPairing:TVT-Ding}
	Z. Ding, P. Fan and H. V. Poor, "Impact of User Pairing on 5G Nonorthogonal Multiple-Access Downlink Transmissions," \textit{IEEE Transactions on Vehicular Technology}, vol. 65, no. 8, pp. 6010-6023, Aug. 2016.
	
	\bibitem{NOMA:PowerAllocation:TWC-Choi}
	J. Choi, ``On the power allocation for MIMO-NOMA systems with layered transmissions," {\it IEEE Transactions on Wireless Communications,} vol. 15, no. 5, pp. 3226-3237, May 2016.
	
	\bibitem{NOMA:PowerAllocation:TWC-Yang}
	Z. Yang, Z. Ding, P. Fan, and N. Al-Dhahir, ``A general power allocation scheme to guarantee quality of service in downlink and uplink NOMA systems," {\it IEEE Transactions on Wireless Communications,} vol. 15, no. 11, pp. 7244-7257, Nov. 2016.
	
	\bibitem{NOMA:UserFairness:SPL-Timotheou}
	S. Timotheou and I. Krikidis, ``Fairness for non-orthogonal multiple access in 5G systems," {\it IEEE Signal Processing Letters,} vol. 22, no. 10, pp. 1647-1651, Oct. 2015.
	
	\bibitem{NOMA:URLLC:ArXiv-Choi}
	M. Choi, J. Kim and J. Moon, ``Dynamic Power Allocation and User Scheduling for Power-Efficient and Low-Latency Communications", available on-line at: \textit{arXiv:1807.00682}, June 2018.
	
	\bibitem{MIMO-NOMA:TWC-Ding}
	Z. Ding, F. Adachi, and H. V. Poor, ``The application of MIMO to nonorthogonal multiple access," {\it IEEE Transactions on Wireless Communications,} vol. 15, no. 1, pp. 537-552, Jan. 2016.
	
	\bibitem{MIMO-NOMA:TWC-Ding2}
	Z. Ding, R. Schober, and H. V. Poor, ``A general MIMO framework for NOMA downlink and uplink transmission based on signal alignment," {\it IEEE Transactions on Wireless Communications,} vol. 15, no. 6, pp. 4438-4454, Jun. 2016.
	
	\bibitem{MIMO-NOMA:WCommLetter-Sun}
	Q. Sun, S. Han, C.-L. I, and Z. Pan, ``On the ergodic capacity of MIMO NOMA systems," {\it IEEE Wireless Communications Letters,} vol. 4, no. 4, pp. 405-408, Aug. 2015.
	
	\bibitem{NOMA-CoMP:CommLetter-Choi}
	J. Choi, ``Non-orthogonal multiple access in downlink coordinated two-point systems", {\it IEEE Communications Letters,} vol. 18, no. 2, pp. 313-316, Feb. 2014.
	
	\bibitem{NOMA-CoMP:ICC-Han}
	D.-J. Han, M. Choi, and J. Moon, ``NOMA in distributed antenna system for max-min fairness and max-sum-rate", available on-line at \textit{arXiv:1706.05314}
	
	\bibitem{NOMA-SWIPT:TWC-PD}
	P. D. Diamantoulakis, K. N. Pappi, Z. Ding and G. K. Karagiannidis, ``Wireless-Powered Communications With Non-Orthogonal Multiple Access," \textit{IEEE Transactions on Wireless Communications}, vol. 15, no. 12, pp. 8422-8436, Dec. 2016.
	
	\bibitem{NOMA:security:CommLetter-Zhang}
	Y. Zhang, H. M. Wang, Q. Yang, and Z. Ding, ``Secrecy sum rate maximization in non-orthogonal multiple access," {\it IEEE Communications Letters,} vol. 20, no. 5, pp. 930-933, May 2016.
	
	\bibitem{NOMA:uni-cN:Ding}
	Z. Ding, M. Peng and H. V. Poor, ``Cooperative Non-Orthogonal Multiple Access in 5G Systems," \textit{IEEE Communications Letters}, vol. 19, no. 8, pp. 1462-1465, Aug. 2015.
	
	\bibitem{NOMA:relay-uni-cN:Ding}
	Z. Ding, H. Dai and H. V. Poor, ``Relay Selection for Cooperative NOMA," \textit{IEEE Wireless Communications Letters}, vol. 5, no. 4, pp. 416-419, Aug. 2016.
	
	\bibitem{NOMA:coop-relay:CommLetter}
	J.-B. Kim and I.-H. Lee., ``Capacity analysis of cooperative relaying systems using non-orthogonal multiple access", {\it IEEE Communications Letters,} vol. 19, no. 11, pp. 1949-1952, Nov. 2015.
	
	\bibitem{NOMA:uni-cN-SWIPT:Liu}
	Y. Liu, Z. Ding, M. Elkashlan and H. V. Poor, ``Cooperative Non-orthogonal Multiple Access With Simultaneous Wireless Information and Power Transfer," \textit{IEEE Journal on Selected Areas in Communications}, vol. 34, no. 4, April 2016.
	
	\bibitem{EiFunction}
	M. Abramowitz and Irene A. Stegun. {\it Handbook of mathematical functions: with formulas, graphs, and mathematical tables.} Vol. 55. Courier Corporation, 1964.
	
\end{thebibliography}
%

%

\begin{IEEEbiography}[{\includegraphics[width=1in,height=1.25in,clip,keepaspectratio]{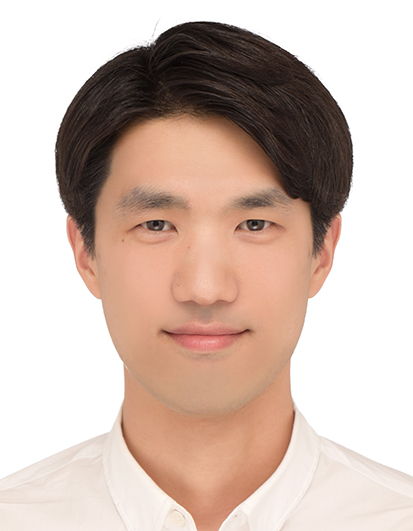}}]{Minseok Choi}
	received the B.S. and M.S. degree in electrical engineering from the Korea Advanced Institute of Science and Technology (KAIST), Daejeon, Republic of Korea, in 2011 and 2013, respectively. He is currently pursuing the Ph.D degree in KAIST. His research interests include wireless caching network, NOMA, 5G Communications, and stochastic network optimization.\end{IEEEbiography}
\begin{IEEEbiography}[{\includegraphics[width=1in,height=1.25in,clip,keepaspectratio]{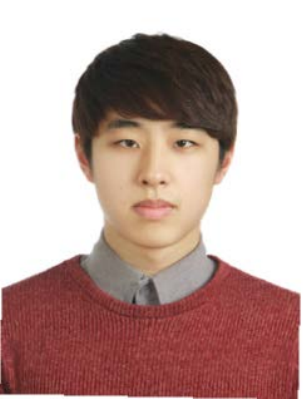}}]{Dong-Jun Han}
	received the B.S. degrees in Mathematics and Electrical Engineering from the Korea Advanced Institute of Science and Technology (KAIST), Daejeon, South Korea, in 2016, where he is currently pursuing the M.S. degree. His research interests include 5G wireless communications and machine learning.\end{IEEEbiography}
\begin{IEEEbiography}[{\includegraphics[width=1in,height=1.25in,clip,keepaspectratio]{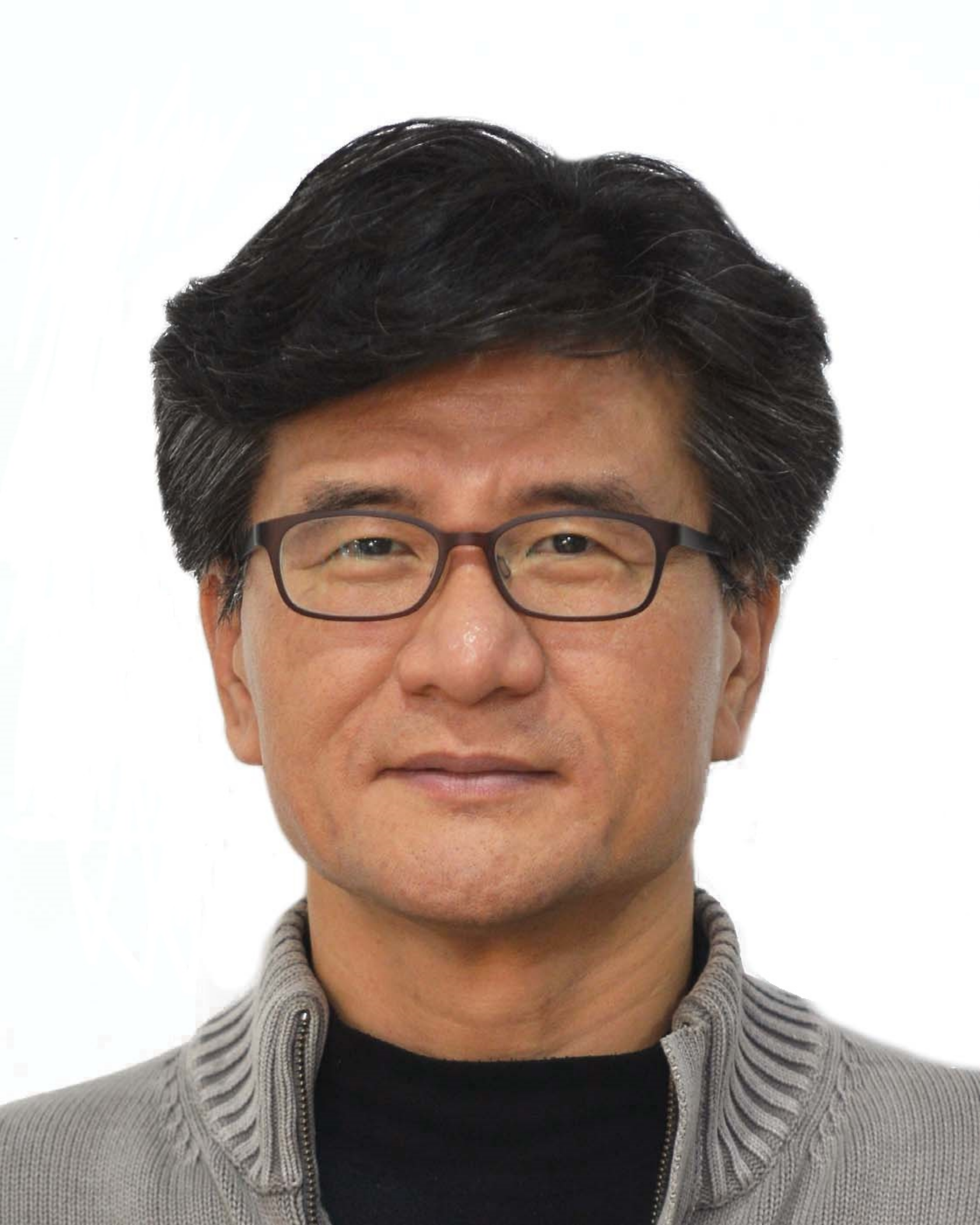}}]{Jaekyun Moon}
	received the Ph.D degree in electrical and computer engineering at Carnegie Mellon University, Pittsburgh, Pa, USA. He is currently a Professor of electrical enegineering at KAIST. From 1990 through early 2009, he was with the faculty of the Department of Electrical and Computer Engineering at the University of Minnesota, Twin Cities. He consulted as Chief Scientist for DSPG, Inc. from 2004 to 2007. He also worked as Chief Technology Officier at Link-A-Media Devices Corporation. His research interests are in the area of channel characterization, signal processing and coding for data storage and digital communication. Prof. Moon received the McKnight Land-Grant Professorship from the University of Minnesota. He received the IBM Faculty Development Awards as well as the IBM Partnership Awards. He was awarded the National Storage Industry Consortium (NSIC) Technical Achievement Award for the invention of the maximum transition run (MTR) code, a widely used error-control/modulation code in commercial storage systems. He served as Program Chair for the 1997 IEEE Magnetic Recording Conference. He is also Past Chair of the Signal Processing for Storage Technical Committee of the IEEE Communications Society.
	He served as a guest editor for the 2001 IEEE JSAC issue on Signal Processing for High Density Recording. He also served as an Editor for IEEE TRANSACTIONS ON MAGNETICS in the area of signal processing and coding for 2001-2006. He is an IEEE Fellow.
\end{IEEEbiography}\vfill




\end{document}